\newtheorem{thm}{Theorem}
\newtheorem{lm}{Lemma}
\newtheorem*{rem}{Remarks}
\newtheorem{claim}{Claim}
\newtheorem{claim2}{Claim}
\begin{document}
  \title{The complexity of nonrepetitive edge coloring of graphs}
  \author{Fedor Manin
    \thanks{This research was conducted as part of the Summer Undergraduate
    Research Fellowship program at Caltech and partly supported by grant NSF
    CCF-0346991.}}
  \maketitle
  \begin{abstract}
    A squarefree word is a sequence $w$ of symbols such that there are no
    strings $x, y$, and $z$ for which $w=xyyz$.  A nonrepetitive coloring of a
    graph is an edge coloring in which the sequence of colors along any open
    path is squarefree.  The Thue number $\pi(G)$ of a graph $G$ is the least
    $n$ for which the graph can be nonrepetitively colored in $n$ colors.  A
    number of recent papers have shown both exact and approximation results for
    Thue numbers of various classes of graphs.  We show that determining
    whether a graph $G$ has $\phi(G) \leq k$ is $\mathbf{\Sigma}_2^p$-complete.

    When we restrict to paths of length at most $n$, the problem becomes
    \textbf{NP}-complete for fixed $n$.  For $n=2$, this is the edge coloring
    problem; thus the bounded-path version can be thought of as a
    generalization of edge coloring.
  \end{abstract}

  \section{Introduction}

  The study of avoiding repetition in combinatorial structures stems from a
  1906 paper by Axel Thue \cite{thue}, which showed that a there is an infinite
  word in $\{0,1,2\}$ that has no subword of the form $xx$ for some finite
  word $x$.  Such a word is called squarefree.

  A paper by Alon, Grytczuk, Haluszczak, and Riordan \cite{alon} introduced the
  generalization to edge colorings of graphs.  An edge coloring of a graph is
  \emph{nonrepetitive} if for any open path through the graph, the pattern of
  colors along the path is squarefree.  The Thue number $\pi(G)$ is the
  smallest number $k$ for which $G$ can be nonrepetitively colored with $k$
  colors.  Alon et al.~also mentioned the possibility of studying \emph{vertex
    nonrepetitive colorings}, i.e., vertex colorings with the same property.
  This notion has been further studied in papers by Czerwi\'nski and Grytczuk
  \cite{czer} and by Bar\'at and Varj\'u \cite{barat1}.  Other similar notions
  involving walks rather than paths have been developed by Bre\v sar and Klav\v
  zar \cite{bresar} and again by Bar\'at and Varj\'u \cite{barat2}.  For a more
  detailed survey of results regarding vertex nonrepetitive colorings, we refer
  the reader to Grytczuk \cite{gry}.

  The many papers in this area employ various notations and terminologies; we
  will stick to the original terminology of Alon et al.~whenever possible.
  Therefore, we refer to edge nonrepetitive colorings simply as nonrepetitive
  colorings and denote the edge coloring Thue number as $\pi(G)$.  We call a
  path \emph{square} if it is of the form $xx$ for some color-string $x$.

  Our goal is to study nonrepetitive colorings from the point of view of
  computational complexity.  Whatever version of $\pi(G)$ is used, deciding
  whether $\pi(G) \leq k$ is an $\exists\forall$ problem: does there exist a
  $k$-coloring of the edges or vertices of $G$ such that the pattern of colors
  along any open path, or one of the kinds of walks, is squarefree?  Such
  questions usually belong in the class $\mathbf{\Sigma_2^p}=\mathbf{NP^{NP}}$,
  a class in the second level of the polynomial hierarchy.  This class and
  problems known to be complete for it are described in a survey by Schaefer
  and Umans \cite{umans1}.

  Many of the results of papers on nonrepetitive colorings which bound the Thue
  number of various classes of graphs---for example, Alon et al.\cite{alon}
  show that $\pi(G) \leq c\Delta^2$ for constant $c$ for any graph of degree
  $\Delta$.  On the other hand, Alon et al.~also show that for any $\Delta$
  there is a graph $G$ of maximum degree $\Delta$ such that any nonrepetitive
  vertex coloring of $G$ requires at least $c\Delta^2/\log\Delta$ colors.
  Both these results, and many others like them, are probabilistic and thus
  give us little information about specific graphs.

  The complexity of vertex nonrepetitive colorings has recently been studied by
  Marx and Schaefer \cite{ms}, who showed that deciding whether such a coloring
  is nonrepetitive is \textbf{coNP}-complete.  In this paper we show that the
  corresponding problem for edge colorings is \textbf{coNP}-complete.  We then
  show that deciding whether $\pi(G) \leq k$ is $\mathbf{\Sigma_2^p}$-complete.

  Finally, we turn to colorings that are nonrepetitive for bounded-length
  paths.  Let $\chi_i^e(G)$ be the least number of colors required to color the
  edges of $G$ so that no open path contains the square of a pattern of colors
  of length $\leq i$, and let $\chi_i(G)$ be the least number of colors
  required to color the vertices of $G$ under analogous conditions.  Since the
  number of paths of length $2i$ is bounded by $\lvert V \rvert^{2i}$, for
  fixed $i$ each such problem is contained in \textbf{NP}.  Then $\chi_1(G)
  \leq k$ is simply \textsc{graph coloring} and $\chi_1^e(G) \leq k$ is
  \textsc{edge coloring}, shown to be \textbf{NP}-complete by Holyer in 1981
  \cite{holyer}.  $\chi_2(G)$ is known as the symmetric chromatic number; the
  corresponding decision problem is shown to be \textbf{NP}-complete by Coleman
  and Mor\'e \cite{cm}.  We show that the decision problems corresponding to
  $\chi_2^e(G)$ and $\chi_{2k}^e(G)$ for any $k \geq 4$ are
  \textbf{NP}-complete.

  \section{Methods}
  The nonrepetitive colorability of a graph with $k$ colors is a rather
  slippery property.  In the construction of a nonrepetitive coloring, changing
  the color of an edge may produce a square path of arbitrary length and thus
  affected by colors of edges arbitrarily far away.  Thus the local properties
  of a graph cannot guarantee that it is nonrepetitively colorable.  On the
  other hand, if a graph does not have a nonrepetitive coloring, then we must
  examine all of the colorings and find a square path in each, which also feels
  like a hard problem.  Thus neither direction of the reduction is easy.

  There are two immediately visible approaches to resolving this dilemma.
  Perhaps we could focus on fairly sparse graphs, thus minimizing the number of
  paths we must make sure are nonrepetitive.  But this would greatly increase
  the number of colorings we must consider.  We thus use this approach only in
  determining whether a preassigned coloring is nonrepetitive.  On the other
  hand, we could focus on graphs in which the degree of every vertex is close
  to the maximal degree of the graph.  Although this allows us to easily
  discard most possible colorings, the number of distinct paths grows
  exponentially.  Additionally, it becomes harder to create graphs which differ
  subtly in global structure in such a way that some of them are
  nonrepetitively colorable and others are not.

  In our proof, we combine these two approaches, producing graphs that are
  locally dense, in the sense that the degree of most vertices is large and
  close to the maximal degree, but globally sparse, in the sense that long
  paths must traverse a number of bottlenecks with only a few connections.
  While a pattern of local structures, or gadgets, is common to many if not all
  graph-theoretic complexity proofs, in this case in particular it allows us to
  partially isolate each portion of the problem: once we have established that
  the local structure is nonrepetitively colorable in a very constrained set of
  ways, we can us this to determine whether the global structure is
  nonrepetitively colorable.

  We will first construct the global structure, which is similar to the
  well-known reduction which shows that \textsc{hampath} is
  \textbf{NP}-complete.  The local structure, added later, will be composed of
  several types of dense subgraphs.  The simplest of these are the cliques of
  size $2^n$, which Alon et al.~show to be nonrepetitively colorable with $2^n$
  colors.  We also use the $n$-dimensional hypercube graphs, which have a
  number of useful properties which are discussed in the next section.
  Finally, in a number of cases we will \emph{saturate} a vertex $v$ by adding
  extra vertices only joined to $v$, called the \emph{plume} of $v$, to
  artificially inflate the degree.  This will allow us to more effectively use
  the saturation lemma below.

  \section{Preliminaries}
  We will now prove some results about nonrepetitive colorings that will aid in
  the complexity proofs.

  \begin{lm}[Saturation lemma]
    Let a vertex of a graph be \emph{saturated} if it is of maximal degree, and
    let a \emph{diamond} be a cycle of four vertices.  Suppose two opposing
    vertices of a diamond, $A$ and $B$, in a graph of degree $k$, are saturated
    and the saturating edges are not directly connected.  Then the graph may
    only be $k$-nonrepetitively colored if the diamond is colored with exactly
    two colors.
  \end{lm}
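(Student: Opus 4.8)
The plan is to reduce everything to a single pigeonhole observation at a saturated vertex together with one carefully chosen open path. First I would establish a warm-up fact to be used throughout: if a graph of maximum degree $k$ admits a nonrepetitive coloring with $k$ colors, then the edges at any vertex $v$ receive pairwise distinct colors, so at a vertex of degree $k$ they are in bijection with the color set. Indeed, for distinct neighbors $u,w$ of $v$ the path $u\,v\,w$ is an open path whose color word is $\bigl(c(uv),c(vw)\bigr)$, and this is a square unless $c(uv)\neq c(vw)$; since such a $v$ has $k$ incident edges and only $k$ colors are available, every color occurs on exactly one edge at $v$. In particular this holds at the saturated vertices $A$ and $B$. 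Writing the diamond as $A\,C\,B\,D$ with $e_1=AC$, $e_2=CB$, $e_3=BD$, $e_4=DA$ and respective colors $c_1,c_2,c_3,c_4$, the same length-two argument applied inside the diamond gives $c_1\neq c_2$, $c_2\neq c_3$, $c_3\neq c_4$, $c_4\neq c_1$, i.e.\ the restriction of the coloring to the diamond is a proper edge-coloring of the $4$-cycle. Such a coloring can never use only one color ($c_1=c_2$ would make $A\,C\,B$ a square), and it uses exactly two colors precisely when $c_1=c_3$ and $c_2=c_4$. So it remains to derive a contradiction from the assumption that the diamond uses three or four colors.

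Under that assumption either $c_1\neq c_3$ or $c_2\neq c_4$; relabelling $C\leftrightarrow D$ preserves all hypotheses and interchanges these two possibilities, so I may assume $c_1\neq c_3$. Using the bijection at $B$, let $B'$ be the unique neighbor of $B$ with $c(BB')=c_1$; since $c_1\neq c_2=c(CB)$ and $c_1\neq c_3=c(BD)$, the vertex $B'$ is not $C$ or $D$, so $B'$ lies on a saturating edge of $B$, and $B'\neq A$ because $A$ and $B$ are non-adjacent. Using the bijection at $A$, let $v$ be the unique neighbor of $A$ with $c(Av)=c_2$; since $c_2\neq c_1=c(AC)$ we have $v\neq C$, and $v\neq B$, so either $v=D$ (exactly when $c_4=c_2$) or $v$ lies on a saturating edge of $A$. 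Then the walk $B'\,B\,C\,A\,v$ has color word $\bigl(c_1,c_2,c_1,c_2\bigr)$, which is a square; it is a genuine open path once one checks that its five vertices are distinct, and the only nontrivial check is $B'\neq v$: if $v=D$ then $v$ is in the diamond while $B'$ is not, and if instead $v$ and $B'$ both lie on saturating edges then $v=B'$ would force a saturating edge of $A$ and a saturating edge of $B$ to share the vertex $v=B'$, contradicting the hypothesis that they are not directly connected. This square open path contradicts nonrepetitiveness, so the diamond must use at most two colors, hence exactly two.

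I expect the heart of the matter to be the second paragraph: the key idea is that the word $c_1\,c_2\,c_1$ already forced along the sub-path $B\,C\,A$ inside the diamond can always be completed to the forbidden square $c_1\,c_2\,c_1\,c_2$ by prepending one saturating edge at $B$ of color $c_1$ and appending one edge at $A$ of color $c_2$, both of which exist by the pigeonhole observation. The ``not directly connected'' hypothesis then enters in exactly one place, namely to keep this five-vertex walk from closing up; everything else --- the $C\leftrightarrow D$ reduction, and the fact that ``at most two colors'' already forces ``exactly two'' --- is routine.
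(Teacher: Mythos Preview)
Your argument is correct and is essentially the same as the paper's: both take the two colors $c_1,c_2$ along one side $A\text{--}C\text{--}B$ of the diamond, use saturation at $A$ and $B$ to find matching-colored edges, and observe that the resulting $c_1c_2c_1c_2$ walk is a forbidden square unless it closes up into the other side $A\text{--}D\text{--}B$ of the diamond. You carry out more carefully the verification that this walk is an open path (in particular, you make explicit where the ``not directly connected'' hypothesis is used), whereas the paper leaves those checks implicit.
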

  \begin{proof}
    Let $a$ and $b$ be the colors of one path from $A$ to $B$ through the
    diamond.  Then, since $A$ and $B$ are saturated, there must be an edge of
    color $b$ at $A$ and an edge of color $a$ at $B$, forming a path of colors
    $abab$.  If this path is not a cycle, then the coloring is not
    nonrepetitive.  Thus, these colors must form the other half of the
    diamond.
  \end{proof}
  
  \begin{lm}[Properties of hypercubes]
    The nonrepetitive $k$-edge-coloring of a $k$-hypercube exists, is unique
    up to permutation, and has the following properties:
    \begin{enumerate}
    \item[(1)] The shortest path between any two points consists of distinct
      colors.
    \item[(2)] Any permutation of the colors of a path between two points
      consisting of distinct colors also colors a path between those two
      points, and no other sequence of distinct colors does.
    \item[(3)] Any path consisting of distinct colors is a shortest path
      between its endpoints.
    \item[(4)] There are ${k \choose i}$ vertices at distance $i$ from a given
      vertex $V$.
    \end{enumerate}
    The uniqueness of coloring also holds for portions of a $k$-hypercube whose
    distance from a chosen base vertex $V$ is at most $m$, which we call the
    \emph{first $m$ layers} of the hypercube.
  \end{lm}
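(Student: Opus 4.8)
The plan is to produce the coloring explicitly, verify the four properties for it by elementary counting, and then obtain uniqueness by a local-to-global argument built on the Saturation Lemma.

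Identify the vertices of the $k$-hypercube with the subsets of $\{1,\dots,k\}$, two subsets being adjacent when they differ by a single element, and let the \emph{coordinate coloring} assign to each edge the color $i$ when its endpoints differ in the element $i$. This is a proper $k$-edge-coloring. To see it is nonrepetitive, read the color sequence along a path as the sequence of elements successively toggled: a square factor would toggle two consecutive blocks with identical color sequences, hence with equal symmetric differences, so performing the two blocks in succession returns to the vertex seen just before the first block --- impossible on an open path. Properties (1)--(4) then follow by bookkeeping. A path of distinct colors toggles distinct elements, so its endpoints differ in exactly that set and the path has length equal to their distance; this is (3), and with it (1) and the fact that any distinct-color path between two fixed vertices uses the forced set of colors separating them. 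Conversely, toggling a fixed set of elements in any order gives a simple path between the same two vertices (its partial symmetric differences are distinct, having distinct sizes), which supplies the rest of (2); and (4) is the number of subsets at distance $i$.

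For uniqueness I would first note that any nonrepetitive $k$-coloring of the hypercube is automatically proper --- two equal colors at a vertex give a length-two square --- hence a $1$-factorization in which all $k$ colors appear at every vertex. The key claim is that opposite edges of every $4$-face receive the same color. This follows from the argument proving the Saturation Lemma, applied to such a face and a pair $A,B$ of its opposite vertices, which are saturated since the hypercube is regular: one path of the face from $A$ to $B$ has colors $a,b$, the saturating edges of color $b$ at $A$ and color $a$ at $B$ produce a walk with color pattern $abab$, and a short case analysis --- the symmetric difference of two distinct singletons is never a singleton --- shows the only way this walk can fail to be an open path is for it to close up exactly along the $4$-face, which forces the face to use only $a$ and $b$ (this also disposes of the ``not directly connected'' hypothesis, since the only degenerate case already yields the conclusion). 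Granting the claim, uniqueness is immediate: within a parallel class of edges (those toggling a fixed element $i$) any two edges are joined by a chain of $4$-faces in which they appear as opposite edges, so the class is monochromatic; the $k$ classes all meet at every vertex and hence receive distinct colors, which makes the coloring the coordinate coloring up to a permutation.

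The ``first $m$ layers'' version runs the same argument inside the Hamming ball of radius $m$ about the base vertex $V$: vertices of weight at most $m-1$ still have full degree $k$ there, so the Saturation-Lemma step still applies to $4$-faces two of whose opposite vertices have weight $\le m-1$, and from any edge of a parallel class lying in the ball one reaches the edge at $V$ through a sequence of such faces obtained by toggling an element currently present --- i.e.\ by moving toward $V$ --- with all vertices and saturating edges involved remaining inside the ball; then the $k$ edges at $V$ force distinct colors. I expect the main obstacle to be precisely the bookkeeping in the Saturation-Lemma step: checking the handful of ways two of the five vertices of the $abab$-walk could coincide, so as to be sure that the walk is a genuine open path --- hence a witnessing repetition --- unless it degenerates onto the $4$-face itself. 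Everything else is routine counting, and the restriction to the first $m$ layers adds only the mild point that weight-decreasing moves keep one inside the ball.
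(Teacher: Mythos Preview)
Your proposal is correct. The existence construction (the coordinate coloring) and the derivation of properties (1)--(4) via symmetric-difference bookkeeping are identical to the paper's. For uniqueness both arguments rest on the Saturation Lemma applied to $4$-faces, but the propagation is organized differently: the paper fixes a base vertex and shows by induction that the colors on layer $m$ force those on layer $m+1$, which makes the ``first $m$ layers'' statement automatic; you instead show that each parallel class of hypercube edges is monochromatic by linking any two of its edges through a chain of $4$-faces, and then handle the first $m$ layers by keeping that chain inside the Hamming ball via weight-decreasing moves. Your version is a little more global and makes the $1$-factorization structure explicit; the paper's is more directly tailored to the layered statement. On one point you are actually more careful than the paper: you explicitly work through the degenerate configurations of the $abab$ walk, whereas the paper invokes the Saturation Lemma on hypercube faces without checking its ``saturating edges are not directly connected'' hypothesis.
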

  \begin{proof}
    We will identify the vertices of the hypercube with $\{0,1\}^k$, where
    there is an edge between $(\chi_1,\ldots,\chi_k)$ and $(\chi_1,\ldots,
    \chi_{i-1},1-\chi_i,\chi_{i+1},\ldots,\chi_k)$ for every $i$.

    Without loss of generality, let us choose the vertex $V$ identified with
    $\bar{0}$ to be a base vertex.  $V$ is saturated, so the coloring of the
    edges adjacent to it is unique up to permutation.  Thus the coloring of the
    first layer is unique.  Now suppose that we have colored the first $m$
    layers.  Every vertex is saturated and any two adjacent edges form a
    diamond (a face.)  Hence by the previous lemma the coloring of the $m$th
    layer determines the coloring of the $m+1$st layer.  By induction, the
    coloring of the first $m$ layers is unique, and so is the coloring of the
    entire hypercube.

    To show that a coloring exists, let us use the color $i$ for all edges
    for which $\chi_i$ changes between the two vertices.  For every $i$, the
    hypercube consists of two $k-1$-hypercubes, one in which the $i$th
    coordinate is 0 for all vertices and one in which it is 1.  Therefore, if
    we have a path containing two edges of color $i$, then we can create an
    alternate path with the same start and end vertices by removing these
    two edges and reversing $\chi_i$ for the vertices in between them.

    Now suppose we have a square path.  Then for every dimension, there is an
    even number of edges in that dimension on the path, and we can remove all
    these edges by pairs from the path.  Thus the path is equivalent to the
    empty path, is therefore a cycle, and all open paths are squarefree.

    We will now use this construction to prove the rest of the properties.  For
    (1), suppose that we have a shortest path between two points that contains
    two edges of the same color.  Then we can remove these two edges and get a
    shorter path between the same two vertices.  Therefore the shortest path
    must consist of distinct colors.

    For (2), clearly a permutation of the changes to variables leads to the
    same result.  On the other hand, for any two shortest paths whose sets of
    colors differ, there is an $i$ such that $\chi_i$ is different between the
    two end vertices for one of the paths, but not for the other.  Thus the end
    vertices are different for the two paths.

    (3) follows directly from (1) and (2).

    A path of length $i$ represents a change in $i$ members of the $k$-tuple.
    There are ${k \choose i}$ possible such changes, proving (4).
  \end{proof}

  \section{Nonrepetitive coloring problems}

  In this section, we consider several related problems in which the coloring
  is entirely or almost entirely predetermined, and the main difficulty of the
  problem is in finding whether the coloring is nonrepetitive.  The proofs in
  this section are all related and provide a framework for the proof of the
  $\mathbf{\Sigma_2^p}$-completeness of \textsc{Thue number}.  We first
  consider colorings of directed graphs.

  \begin{thm}
    Given a directed graph $G$ and a coloring thereof, it is
    \textbf{coNP}-complete to determine whether the coloring is nonrepetitive.
    We call this problem \textsc{directed nonrepetitive coloring}.
  \end{thm}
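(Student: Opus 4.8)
The plan is as follows. Membership in \textbf{coNP} is immediate: a coloring fails to be nonrepetitive exactly when the graph contains an open path whose color sequence is a square $yy$, and such a path is a polynomial-size certificate one can check in polynomial time --- verify that consecutive vertices are joined by edges of the claimed colors and directions, that no vertex repeats, and that the resulting color word has the form $yy$. So the real work is the \textbf{coNP}-hardness, which I would obtain by reducing \textsc{3-sat} to the complementary problem: given a directed edge-colored graph, does it contain an open path spelling a square?

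Given a 3-CNF formula $\phi$ with clauses $C_1,\ldots,C_m$, I would use the standard fact that $\phi$ is satisfiable iff one can pick one literal from each clause so that no variable is picked both positively and negatively. The graph $G$ is built so that an intended square open path spells $yy$, where $y$ encodes such a choice: $y=s\,\ell_1\,s\,\ell_2\,s\cdots s\,\ell_m$, the color $\ell_j$ naming which of the three literals of $C_j$ is chosen (all $3m$ of these ``literal colors'' distinct) and $s$ a separator color, inserted so that no two consecutive edges on the path ever share a color --- this disposes of the trivial length-$2$ squares $cc$, which no nonrepetitive coloring can contain. The first copy of $y$ is read along a ``choice chain'': $m$ three-way-branch gadgets in series, the $j$th committing to one literal of $C_j$. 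The second copy of $y$ is read along a structural mirror of that chain with corresponding edges colored identically, and every edge is oriented so that the first chain flows into a single junction vertex and the mirror flows out of it, which keeps an open path from drifting between the two halves except at that junction.

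The device that upgrades ``one legal literal per clause'' to ``globally consistent choice'' is a family of vertex identifications. For each pair of complementary literal-slots --- one slot being the literal $x_i$ in some clause, another being $\bar x_i$ in another --- I identify the vertex of the first chain traversed when the first slot is chosen with the vertex of the mirror traversed when the second is chosen. Because the two copies of $y$ must agree, a path forced through one of these shared vertices twice is exactly a path that chooses $x_i$ at one slot and $\bar x_i$ at the other; such a path is not open, while a consistent choice never revisits one. Together with coloring the choice chain ``rigidly'' enough that a color word determines the path producing it, this should give: $G$ has a square open path iff $\phi$ is satisfiable, and $G$ is built in polynomial time.

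The forward direction --- a satisfying assignment yields a square open path --- is routine; I expect the bulk of the proof to be the converse, namely controlling \emph{every} square open path of $G$. Since altering or inspecting one edge of a coloring can create a square arbitrarily far away, one has to show that the gadget colorings are square-rigid: that no square has its midpoint away from the junction, that no square lives inside a single half, and that the vertex identifications create neither shortcut routes around the gadgets nor new squares straddling a junction. This is exactly the ``locally dense, globally sparse'' philosophy of Section~2 --- the gadgets are made rich enough to pin down how they may be traversed and colored, while long paths are funneled through a few bottlenecks so their global behavior can be analyzed --- and it is where I expect the main difficulty to lie.
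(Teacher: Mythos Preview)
Your reduction is organized around clauses: pick one literal per clause along a choice chain, replay the same choices along a mirror chain, and prevent inconsistent selections by identifying a chain-1 vertex at an $x_i$-slot with a chain-2 vertex at an $\bar x_i$-slot. This is a genuinely different architecture from the paper's, which is organized around \emph{variables}: each variable gadget has two parallel paths (true/false), the clause gadgets hang off specific vertices on those paths, and the first half of the square is provided not by a second choice chain but by a rigid ``snout'' --- a simple directed path carrying one edge of every color in order. The snout-based design means the analysis of unintended squares reduces to showing that any square must contain the whole snout and hence march through the variable and clause gadgets in the intended order; consistency between literal choices is a non-issue because the variable gadget forces a single truth value per variable up front.

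Your consistency device, by contrast, has a concrete bug. You identify, for \emph{each} pair of complementary literal-slots, the chain-1 vertex of one with the chain-2 vertex of the other. If $x_i$ occurs in clauses $C_{j_1}$ and $C_{j_2}$ while $\bar x_i$ occurs in some $C_k$, then the identifications (chain~1, $j_1$, $x_i$)$\sim$(chain~2, $k$, $\bar x_i$) and (chain~1, $j_2$, $x_i$)$\sim$(chain~2, $k$, $\bar x_i$) transitively merge the two chain-1 vertices at $j_1$ and $j_2$. A satisfying assignment that legitimately selects $x_i$ in both $C_{j_1}$ and $C_{j_2}$ then visits that merged vertex twice in the first half alone, so the intended path is not open. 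If instead you identify only along a partial matching of occurrences, an inconsistent choice that avoids the matched pair goes undetected; and restricting to formulas where each literal occurs at most once drops you into a polynomial-time fragment of SAT, so hardness is lost. Either way the ``forward'' direction (satisfying assignment $\Rightarrow$ square open path) breaks, independently of the rigidity analysis you flag as the hard part. The paper sidesteps all of this by pushing the existential choice to the variable level and letting the snout serve as a non-branching template for one half of the square.
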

  \begin{proof}
    We reduce from co-3SAT.  The reduction is similar to the standard reduction
    from 3SAT to \textsc{hampath}.  The edges are given a large number of
    colors, with 4 or fewer edges per color.  In effect, we restrict possible
    square paths to those that go through two edges of every color, mimicking
    the requirement that a Hamilton path must pass through every vertex.

    Let $f(x_1,\ldots,x_n)$ be an instance of 3SAT.  The reduction will use two
    types of gadgets: a variable gadget and a clause gadget.

    The variable gadget consists of two sets of $M$ vertices, where $M$ is the
    largest number of number of instances of a literal, together with a
    beginning vertex $b_i$.  A path of edges goes from $b_i$ through each of
    the vertices in a set to $b_{i+1}$, or a vertex $c$ if $i=n$.  The sets of
    vertices and paths going through them correspond to a true assignment and a
    false assignment; thus each set signifies a literal.

    Suppose $f$ has clauses numbered $C_1,\ldots,C_m$.  A clause gadget for
    clause $C_j$ has one vertex $e_j$, and we add an extra vertex $d$ and an
    edge from $c$ to $e_1$.  For each literal in $C_j$, we add an edge from
    $e_j$ to a vertex on the path corresponding to this literal and an edge
    from this vertex to $e_{j+1}$ or $d$ if $j=m$, making sure that this is the
    only clause gadget that uses this vertex.

    To complete the construction of the graph, we add a series of vertices
    $a_0,\ldots,a_{Mn+2m}$ and edges connecting $a_i$ to $a_{i+1}$ and
    $a_{Mn+2m}$ to $b_1$.  We call this structure the \emph{snout}.  We must
    now assign a coloring to this graph.  We will use a set $w_1,\ldots,
    w_{Mn+2m+1}$ of colors.  Starting from the tip, we color the $i$th edge of
    the snout with $w_i$.  We color the $k$th edge of a path through the
    $x_i$-gadget by $w_{M(i-1)+k}$; an edge going into $e_j$ by $w_{Mn+2j-1}$;
    an edge leaving $e_j$ by $w_{Mn+2j}$; and an edge going into $d$ by
    $w_{Mn+2m+1}$.

    Given a satisfying assignment to $f$, we can construct a square path by
    first traversing the snout, then taking the path through each variable
    gadget corresponding to the opposite of its assignment and the path through
    each clause gadget corresponding to a true literal, as illustrated for an
    example.  Conversely, if we have such a path, then taking the opposite of a
    traversed literal-path to be true gives us a satisfying assignment.
    \begin{figure}[t]
      \begin{center}
	\includegraphics{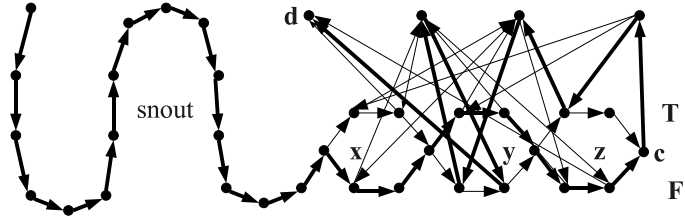}
      \end{center}
      \caption{The graph generated by $(x \vee y \vee z) \wedge (\neg x \vee
	\neg y \vee \neg z) \wedge (x \vee \neg y \vee \neg z)$, with a
	square path highlighted.}
    \end{figure}

    Now suppose that $f$ is unsatisfiable, and that there is a square path
    $p$, which then cannot have the form above.  Suppose first that the edges
    of $p$ are colored only by the colors of the variable gadgets.  Since the
    colors of the variable gadgets form the tip, but not the base of the snout,
    $p$ must consist only of the variable gadgets themselves; but this is
    impossible since the variable gadgets can only be traversed in the forward
    direction and thus a path through them has no two edges of the same color.

    (*) Thus $p$ must contain edges from clause gadgets.  Since any two edges
    in a clause gadget that share a color also share a vertex, an open path can
    contain only one of such a set.  Thus, for every color $a$ that colors an
    edge of $p$ contained in a clause gadget, $p$ contains an edge in the snout
    of color $a$.  Since $b_1$ is only the source of the two branches of the
    first variable gadget, one of these two branches must also be contained in
    the path.  Furthermore, since these branches share a vertex, only one of
    them may be contained in $p$.  The only other edge colored by $w_1$ is the
    edge at the very tip of the snout; thus the entire snout must be contained
    in $p$, so $p$ must contain a path belonging to each variable and clause
    gadget.  As above, this defines a satisfying assignment to $f$, a
    contradiction.

    Clearly, this reduction runs in polynomial, in fact quadratic, time.
  \end{proof}
  \begin{rem}
    The graph resulting from this reduction has maximum in- and out-degree 3.
    Each color is repeated at most four times.
  \end{rem}

  From this we can fairly easily show the same result for undirected graphs.

  \begin{thm}
    Given an undirected graph $G$ and an edge-coloring thereof, it is
    \textbf{coNP}-complete to determine whether the coloring is nonrepetitive.
    We call this problem \textsc{nonrepetitive coloring}.
  \end{thm}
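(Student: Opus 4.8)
The plan is to prove membership in \textbf{coNP} directly and then \textbf{coNP}-hardness by reduction from \textsc{directed nonrepetitive coloring} (equivalently, by adapting the reduction of the previous theorem to undirected graphs).

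For membership, note that the complementary question---does $G$, with the given coloring, admit an open path whose color sequence is a square?---lies in \textbf{NP}: an open path has at most $\lvert V(G)\rvert-1$ edges, so it is a polynomial-size certificate, and given a color sequence $s_1 s_2 \cdots s_{2t}$ one checks in linear time whether $s_i = s_{t+i}$ for all $i\le t$. Hence \textsc{nonrepetitive coloring} is in \textbf{coNP}.

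For hardness, given a directed graph $G$ with coloring $\gamma$ I would build an undirected graph $G'$ with coloring $\gamma'$ so that $\gamma'$ is nonrepetitive if and only if $\gamma$ is. Each directed edge $e$ from $u$ to $v$ of color $c$ is replaced by a short path (a \emph{one-way gadget}) on a constant number of new vertices, colored so that (i) reading the gadget from $u$ to $v$ yields a fixed recoding of $c$ flanked by two direction-marker colors $g(c)$ and $f(c)$, where $f$ and $g$ are injections with mutually disjoint ranges, also disjoint from the original palette---so that no two edges incident to a common vertex of $G'$ receive the same color, which is necessary since two like-colored edges at a vertex already form a square of length two---and (ii) a chain of gadgets traversed in the forward direction reproduces, up to this recoding, the color sequence of the corresponding directed path in $G$. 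Where a vertex of $G$ carries two out-edges (or two in-edges) of equal color---as happens at the branch vertices $b_i$ of the variable gadgets in the reduction of the previous theorem---I would additionally split that vertex into a short path, so that the offending edges become non-adjacent without changing which color sequences appear along the two branches. The reduction plainly runs in polynomial time.

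There are then two things to check. The easy direction is that a directed open square path in $G$ lifts, gadget by gadget, to an open path in $G'$ whose color sequence is again a square, because the recoding $c\mapsto g(c),c,f(c)$ is applied uniformly to each edge. The hard direction---and the main obstacle---is the converse: an open square path $Q$ in $G'$ must be shown to arise this way. The internal vertices of each gadget have degree two, so $Q$ traverses every gadget it meets either not at all, entirely, or (only at its two endpoints) as an initial segment; contracting the fully traversed gadgets turns $Q$ into a walk $P$ in $G$. The difficulty is that $P$ need not respect the directions of $G$: since $G'$ is rainbow at every vertex there is no \emph{local} obstruction to $Q$ entering and leaving a gadget on the same side, that is, to a backtrack at a high-degree vertex of $G$. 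To rule this out I would argue as in the previous theorem: the recoded snout is a chain of degree-two vertices and hence can only be traversed as a whole, in one of two directions, and a marker color on a gadget edge of $Q$ forces the matching marker color to occur elsewhere on $Q$; tracing these forced occurrences shows that $Q$ meets each gadget---and each branch of each variable gadget---consistently, so that $P$ is a simple directed path of $G$ or the reverse of one, whereupon the square structure of $\gamma'(Q)$ descends to a square structure of $\gamma(P)$. Making this forcing argument airtight---in particular verifying that the clause gadgets admit no spurious backtracking square---is the part I expect to require the most care.
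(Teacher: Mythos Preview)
Your high-level plan---pad each edge with direction-marking colors so that undirected traversals are forced to behave like directed ones---matches the paper's. The execution, however, diverges in a way that leaves a real gap.

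You index the markers only by the original color: an edge of color $c$ becomes the path $g(c)\,c\,f(c)$. The paper instead indexes the markers also by \emph{which} of the (up to three) in- or out-edges at a vertex is being replaced: a non-snout edge of color $w_j$ becomes a five-edge path $a_i^j b_i^j w_j c_k^j d_k^j$, where $i\in\{1,2,3\}$ records the out-slot and $k\in\{1,2,3\}$ the in-slot at the two endpoints. This difference matters in two places.

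First, trivial repetitions are avoided automatically in the paper's scheme, since two out-edges at a vertex receive different $a_i^j$ even when they share the color $w_j$. In your scheme both gadgets begin with $g(w_j)$ at that vertex; you propose to ``split'' such a vertex, but you do not say what colors the new split-edges carry, and any choice must be mirrored inside the snout if the intended long square path is still to exist. This cannot be deferred as a detail: at each clause vertex $e_j$ there are \emph{three} out-edges of the same color, and any splitting there alters exactly the color sequences you need to match.

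Second---and this is the crux---the paper's markers $a_i^j$, $b_i^j$, $c_k^j$, $d_k^j$ each occur \emph{at most once outside the snout}. That single fact is all the paper needs for the hard direction: a square path containing a marker needs two copies of it, hence must enter the snout, and then paragraph~(*) from the directed proof applies verbatim. In your scheme $g(w_j)$ occurs once for every non-snout edge of color $w_j$---possibly three times---so nothing forces a hypothetical square into the snout. Your proposed substitute (trace forced marker occurrences to show the underlying walk respects directions) is only asserted; without the once-outside-the-snout property there is no evident way to complete it, and you yourself flag this step as the one ``requiring the most care.''

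The price the paper pays for per-instance markers is that the snout edge of color $w_j$ must be able to realize any of the nine patterns $a_i^j b_i^j w_j c_k^j d_k^j$; it is therefore replaced not by a path but by a small branching graph (the paper's Figure~2). That asymmetry between snout and non-snout replacements is precisely the device your proposal is missing.
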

  \begin{proof}
    We will now modify the previous reduction to work with undirected graphs.
    Let us add the colors $a_i^j,b_i^j,c_i^j,d_i^j$, for $i \in \{1,2,3\}$ and
    $j \leq Mn+2m+1$.  We call these the \emph{direction-determining colors}
    and the rest the \emph{original colors}.  We replace every edge inside the
    snout with the graph in figure 2 and every edge outside the snout with the
    path $a_i^jb_i^jw_jc_k^jd_k^j$ for some $i,k \in \{1,2,3\}$.  This can be
    done in such a way that there are no trivial repetitions, that is, no two
    edges of the same color are incident to the same vertex, since the maximum
    in- and out-degree of the directed graph is 3.  Clearly, this construction
    retains the property that each color colors no more than four edges, since
    none of the direction-determining colors need color more than two edges.
    We will show that this modification leaves the reduction valid.
    \begin{figure}[t]
      \begin{center}
	\includegraphics{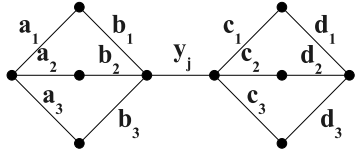}
      \end{center}
      \caption{The graph that replaces a snout edge of color $y_j$.}
    \end{figure}

    If $f$ has a satisfying assignment, we can traverse the graph as before,
    choosing the right path through the snout to produce the same $a_i$s,
    $b_i$s, $c_k$s, and $b_k$s as in the other half of the path.

    Now suppose that $f$ is unsatisfiable.  Suppose that the graph has a
    square path.  Clearly, this path must contain direction-determining colors.
    Each such color is represented only once in the non-snout edges, so we
    cannot make a square path without any snout edges.  This means that the
    paragraph marked (*) in the previous proof still applies, and we are done.
  \end{proof}

  We now modify the statement of the problem slightly in order to get a
  $\mathbf{\Sigma_2^p}$-complete problem.

  \begin{thm}
    Given an undirected graph $G$ and a set $\mathcal{S}$ of pairs $(e,S_e)$
    for each edge $e$, with $S_e$ a set of colors, it is $\Sigma_2^p$-complete
    to determine whether there exists a nonrepetitive coloring of $G$ that
    uses a color $s_e \in S_e$ for each $e$.  We call this problem
    \textsc{restricted Thue number}.
  \end{thm}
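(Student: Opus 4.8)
The plan is to prove membership in $\mathbf{\Sigma_2^p}$ directly, and then to obtain hardness by grafting the palette system $\mathcal{S}$ onto the graph built for \textsc{nonrepetitive coloring} so that it encodes an existentially quantified block of variables. Membership is immediate: a certificate is a coloring $\chi$ with $\chi(e)\in S_e$ for every edge $e$, of polynomial size, and checking that $\chi$ is nonrepetitive is exactly the problem \textsc{nonrepetitive coloring}, which is in \textbf{coNP} by the previous theorem; thus the problem lies in $\mathbf{NP}^{\mathbf{NP}}=\mathbf{\Sigma_2^p}$, by guessing $\chi$ and then calling the \textbf{coNP} oracle.

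For hardness I would reduce from the $\mathbf{\Sigma_2^p}$-complete problem of deciding a sentence $\exists X\,\forall Y\,\psi(X,Y)$ with $\psi$ in $3$-DNF (see \cite{umans1}); writing $f:=\neg\psi$, which is a $3$-CNF formula, this is the same as: \emph{given a $3$-CNF $f(x_1,\dots,x_n,y_1,\dots,y_m)$, is there an assignment $\alpha$ to $x_1,\dots,x_n$ for which $f(\alpha,\cdot)$ is unsatisfiable?} Run the reduction of \textsc{nonrepetitive coloring} on $f$, treating all the $x_i$ and $y_k$ as its variables, to obtain an undirected graph $G_f$ together with its fixed coloring; recall from that proof that any square path of $G_f$ must contain the whole snout and hence exactly one of the two branches of each variable gadget, and that the branches it uses encode a satisfying assignment of $f$, each variable contributing the branch \emph{opposite} to its truth value. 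Now replace the fixed coloring by a palette system $\mathcal{S}$ in which every edge keeps a singleton palette equal to its old color, \emph{except} that for each existential variable $x_i$ we introduce two fresh colors $t_i,f_i$, retire the color $w_{M(i-1)+1}$, and make three changes: the (sub)edge formerly colored $w_{M(i-1)+1}$ on the \emph{true} branch of the $x_i$-gadget gets palette $\{t_i\}$; the corresponding edge on the \emph{false} branch gets palette $\{f_i\}$; and the snout (sub)edge formerly colored $w_{M(i-1)+1}$ gets palette $\{t_i,f_i\}$ --- call this edge $\varepsilon_i$. The clause gadgets and the $y_k$-gadgets are untouched, so a legal coloring of $(G_f,\mathcal{S})$ is exactly a choice of $c_i\in\{t_i,f_i\}$ for each $\varepsilon_i$, which we read as an assignment $\alpha$ to $X$ with $\alpha(x_i)$ false when $c_i=t_i$ and true when $c_i=f_i$.

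The reduction is correct for the following reason. Since $t_i$ and $f_i$ each appear on at most two edges, the argument of the paragraph marked (*) in the proof of \textsc{nonrepetitive coloring} still forces any square path to contain the entire snout, in particular every $\varepsilon_i$. Fix a legal coloring and its assignment $\alpha$, and suppose $G_f$ has a square path $p$. Then $p$ contains $\varepsilon_i$, of color $c_i$, and to pair this color it must contain the unique other edge colored $c_i$; that edge lies on the branch of the $x_i$-gadget whose first (sub)edge is colored $c_i$, so $p$ traverses exactly that branch and, the two branches sharing the vertex $b_i$, not the other one. The clause gadgets attach only to \emph{vertices} of branches and never traverse branch edges, so nothing else changes, and exactly as in \textsc{nonrepetitive coloring} such a square path $p$ exists iff $f$ has a satisfying assignment whose restriction to $X$ is $\alpha$, i.e.\ iff $f(\alpha,\cdot)$ is satisfiable. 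Hence the coloring coming from $\alpha$ is nonrepetitive iff $f(\alpha,\cdot)$ is unsatisfiable, so $(G_f,\mathcal{S})$ is a yes-instance iff some $\alpha$ makes $f(\alpha,\cdot)$ unsatisfiable, which is precisely membership of $f$ in the source problem. The reduction is polynomial, and one verifies that it still introduces no trivial repetitions and keeps at most four edges of each color.

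I expect the main obstacle to be re-running the (*) argument in the presence of the edges $\varepsilon_i$ and the recolored branch edges: one must check that a square path genuinely cannot dodge the snout (this uses that $t_i,f_i$ are fresh and of multiplicity two), that the branch forced for each $x_i$ is exactly the one named by $c_i$ with no remaining freedom, and --- the real danger --- that no palette choice can make $G_f$ vacuously nonrepetitive by \emph{switching off} every branch of some variable gadget. Keeping $\varepsilon_i$ on the snout, which every square path is compelled to traverse in full, is exactly what blocks this last failure mode; had the pair $\{t_i,f_i\}$ instead been offered as a junk alternative directly on the branch edges, the existential player could disable both branches of a gadget and win trivially.
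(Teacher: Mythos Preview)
Your proof is correct and is essentially the same as the paper's: both take the graph from \textsc{nonrepetitive coloring}, split the first-branch color of each outer-quantified variable into two fresh colors (one per branch), and give the matching snout edge the two-element palette, so that a legal coloring encodes an assignment $\alpha$ and is nonrepetitive iff $f(\alpha,\cdot)$ is unsatisfiable. Your write-up is in fact more careful than the paper's terse version---in particular your observation that the unchosen branch color has multiplicity one, and hence that branch cannot participate in any square path, is exactly the mechanism the paper leaves implicit.
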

  In our case, $K_e$ will contain at most two colors for any edge, and only one
  color for most edges.
  \begin{proof}
    Clearly, the problem is in $\Sigma_2^p$.  To prove that it is
    $\Sigma_2^p$-hard, we shall slightly modify the previous reduction,
    reducing from co-$\forall\exists$3SAT.  Let $\forall x \exists y f(x,y)$ be
    an instance of $\forall\exists$3SAT.  Our graph $G$ will be the graph
    constructed for $\exists x,y\,f(x,y)$ in Theorem 2.  For each universally
    quantified variable, the two original edges that initially separate the
    variable gadget into a ``true'' and a ``false'' branch are given the sets
    $\{w_i^1\}$ and $\{w_i^0\}$ respectively, where $w_i$ is the color given
    these edges in the previous reduction.  The corresponding edge in the snout
    is given the set $\{w_i^t,w_i^f\}$.  The rest of the edges are given
    singletons containing the color they were colored in the previous
    reduction.

    Thus a coloring of the graph satisfying the restriction corresponds to an
    assignment of $x$, since it forces any potentially square path to go
    through the specified branch of the corresponding variable gadget.  This
    means, by the previous construction, that there is such a nonrepetitive
    coloring of $G$ iff there is an assignment to $x$ for which $\exists y f(x,
    y)$ is unsatisfiable.
  \end{proof}
  This proof provides the final incarnation of the global structure which we
  will later harness to show that \textsc{Thue number} is $\mathbf{\Sigma_2^p}
  $-complete.

  \section{Problems concerning bounded-length paths}

  In this section, we explore two problems in which we consider a polynomial
  number of possibly square paths, and therefore only the existential
  quantifier in the statement of the problems is over an exponentially large
  domain.  Together, the problems form a generalization of \textsc{edge
    coloring}, a problem which was shown to be \textbf{NP}-complete by Holyer
  in 1980 \cite{holyer}.

  In these proofs, we employ several tricks to find graphs whose small-diameter
  subgraphs are restricted in the ways they can be colored nonrepetitively.
  These tricks will later find applications in the proof of the
  $\mathbf{\Sigma_2^p}$-completeness of \textsc{Thue number}.  Thus although
  this material is not directly related to the main result, it builds some
  machinery for its proof.  Still, a reader interested only in the main proof
  may go on to the next section.

  \begin{thm}
    It is \textbf{NP}-complete to determine whether $\chi_2^e(G) \leq 6$.
  \end{thm}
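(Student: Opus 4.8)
The plan is to show membership in \textbf{NP} (which is immediate from the introduction's general observation) and then to reduce from the problem of deciding whether a cubic graph is $3$-edge-colorable, which is \textbf{NP}-complete by Holyer; this matches the slogan that $\chi_2^e$ is a generalization of edge coloring.

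For membership: a path that is the square of a pattern of length $1$ is a pair of adjacent edges of the same color, and a path that is the square of a pattern of length $2$ is a path of four edges whose color sequence is $abab$; there are only $O(\lvert V\rvert^{5})$ paths of length at most $4$, so given a candidate $6$-coloring one can verify in polynomial time that it contains no square path of length at most $2$. Hence $\chi_2^e(G)\le 6$ is in \textbf{NP}.

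For hardness, the difficulty stressed in the Methods section is that a $\chi_2^e$-coloring is only weakly constrained, so the heart of the reduction is a \emph{rigid} gadget: a small subgraph $P$ carrying three distinguished boundary edges, all of whose valid $6$-colorings are---up to the permitted palette---exactly the assignments of three pairwise distinct colors to those boundary edges, mirroring the proper $3$-edge-colorings of a claw around a vertex of the input graph $G$. The natural way to build $P$ is from a low-radius piece of a hypercube padded with plumes, so that every relevant internal $4$-cycle is a diamond between saturated vertices; the Saturation lemma then pins the interior colors down to a global permutation by the same inductive argument used in the Properties of hypercubes lemma, and because the piece has small radius the short-square constraint of $\chi_2^e$ coincides there with full nonrepetitiveness. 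Given such a $P$, I would attach a copy $P_v$ for each vertex $v$ of $G$, and for each edge $uv$ of $G$ join the corresponding boundary edges of $P_u$ and $P_v$ through a short \emph{linker} subgraph forcing the two boundary colors to agree. Then $G'$ has a $6$-coloring avoiding short square paths precisely when the induced assignment to $E(G)$ is a proper $3$-edge-coloring, so $\chi_2^e(G')\le 6$ iff $G$ is $3$-edge-colorable; each gadget has bounded size, so the reduction is polynomial, and one arranges the gadgets so that the threshold $6$ is meaningful rather than trivially forced.

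The main obstacle is the two-directional verification of the gadgets. In one direction I must prove that $P$ admits no colorings beyond the intended ones---the delicate, lemma-heavy rigidity argument, and also where the count $6$ is fixed: roughly three ``active'' colors to carry the $3$-edge-coloring of $G$ together with a small fixed number of ``structural'' colors needed to make $P$ rigid. In the other direction I must check that wiring the gadgets together creates no new square path of length $2$ or $4$: since a bichromatic length-$4$ square can straddle a gadget boundary, the linkers must be structured---long enough, and colored so---that any $abab$ pattern entering a linker cannot be completed, so that every legitimate $3$-edge-coloring of $G$ lifts to a valid $6$-coloring of $G'$ and, conversely, no coloring failing to come from a proper $3$-edge-coloring can slip through. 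Making the gadgets simultaneously rigid and ``square-proof'' across their interfaces is the crux of the argument.
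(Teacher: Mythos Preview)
Your starting point---reducing from $3$-edge-colorability of cubic graphs via Holyer, and arguing membership in \textbf{NP} by counting short paths---matches the paper exactly. The gadget architecture you propose, however, is dual to the paper's and remains only a plan.

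You propose a \emph{vertex} gadget $P_v$ for each vertex of $G$, carrying three boundary edges whose colors must be pairwise distinct, connected by linker subgraphs along the edges of $G$. You leave the actual construction of $P$ and the linkers open, and you yourself flag rigidity and cross-boundary square-proofing as the unresolved obstacles. That is a genuine gap: nothing in the proposal exhibits such a $P$ with threshold exactly $6$, nor explains how the split into ``three active colors plus a few structural ones'' is realized; the hypercube-with-plumes idea you borrow from the later sections of the paper is the machinery for Theorem~5 ($\chi_{2k}^e$ with $k\ge 4$), not for $k=2$, and there is no indication it can be made to land on the constant $6$.

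The paper instead replaces each \emph{edge} of $G$ by a single fixed gadget, the ``clam'': an outer $4$-cycle on $A,B$ nested around an inner $4$-cycle, joined by four ``gill'' edges. Because $G$ is cubic, each original vertex acquires degree $6$ and is therefore saturated. The Saturation lemma then forces the outer diamond of every clam to use exactly two colors; short $abab$ arguments force the gills and inner diamond to use only the remaining four, and hence the inner diamond is itself two-colored. The decisive local step is at a vertex shared by three clams: if the outer pairs were, say, $\{a,b\},\{c,e\},\{d,f\}$, one finds unavoidable length-$4$ squares running between inner diamonds of adjacent clams, so the three outer pairs must be the partition $\{a,b\},\{c,d\},\{e,f\}$ of the palette. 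Thus the six colors function as three \emph{pairs}, and a valid $6$-coloring of $H$ with no square of length $\le 4$ corresponds exactly to a proper $3$-edge-coloring of $G$. No separate linker is needed, and the threshold $6$ arises as $2\times 3$ rather than as ``active plus structural.''
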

  \begin{proof}
    We reduce from \textsc{edge coloring} for cubic graphs (graphs with 3 edges
    incident to every vertex) and 3 colors, which is shown to be
    \textbf{NP}-complete in \cite{holyer}.  The reduction consists of replacing
    each edge of graph $G$ with the graph depicted in Fig.~3, which we shall
    call a \emph{clam}.  The vertices incident to the edge are mapped to $A$
    and $B$.  We shall argue that the resulting graph $H$ can be 6-colored with
    nonrepetitive 4-paths iff $G$ can be 3-colored.  Suppose that we have such
    a coloring, and call the six colors $a,b,c,d,e,f$.
    \begin{figure}
      \begin{center}
	\includegraphics{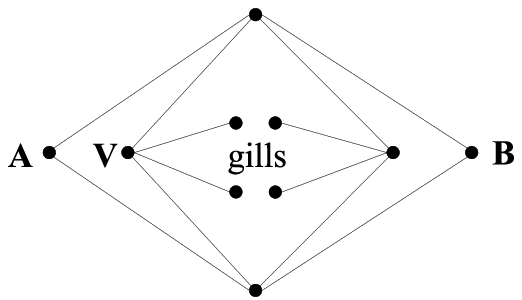}
      \end{center}
      \caption{A clam.}
    \end{figure}

    Note first that vertices $A$ and $B$ are both saturated, since they are
    images of vertices of $G$ of degree 3, so the outer diamond of the clam
    only has two colors---without loss of generality, $a$ and $b$.  Now suppose
    that one of the inner four edges (let us call them the \emph{gills}) is
    colored $a$.  Then if the color of an edge of the inner diamond connected
    to this edge is $c$, there is a path $acac$ through gill, inner diamond,
    outer diamond, and an edge of another clam.  Thus the gills must be colored
    differently from the outer diamond, as must the inner diamond since it is
    adjacent to the outer one.  This means that the vertex labeled $V$ and the
    one symmetric to it are saturated for the four colors other than $a$ and
    $b$, and thus the inner diamond must be colored with exactly two colors
    ($c$ and $d$.)

    Now let us add two more clams, incident to vertices $C$ and $D$
    respectively, at vertex $A$.  Suppose that the outer colors of the clams
    are $ce$ and $df$ respectively.  Then to prevent square paths of the type
    $acac$ from the inner diamond of clam $AC$ to the inner diamond of clam
    $AB$, the inner diamonds of clams $AC$ and $AD$ must consist of the colors
    $df$ and $ce$ respectively.  But then we get square paths of the form
    $dcdc$ from the inner diamond of clam $AC$ to the inner diamond of clam
    $AD$.  Thus the outer colors of $AC$ and $AD$ must be $cd$ and $ef$, in
    some order.  Furthermore, the inner colors of $AC$ must be $ef$ to prevent
    $acac$-type repetitions.  This forces the inner colors of $AD$ to be $ab$.
    This means that the coloring of one clam at a vertex of $G$ determines the
    sets of colors available to the other clams; by induction, each clam in the
    graph must be colored using one of these three patterns.

    This means that if we have a 3-coloring of $G$, we can map the three colors
    onto the three clam-coloring patterns to produce a 2-nonrepetitive
    6-coloring of $H$; conversely, with a nonrepetitive 6-coloring of $H$, we
    can map the three clam-coloring patterns onto three colors with which we
    can edge-color $G$.
  \end{proof}

  \begin{thm}
    For any natural number $k \geq 4$ and graph $G$, it is \textbf{NP}-complete
    to determine whether $\chi_{2k}^e(G) \leq 6k$.
  \end{thm}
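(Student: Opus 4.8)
The plan is to generalize the clam construction from Theorem 5, replacing the length-$2$ gills and diamonds with length-$k$ paths, so that the critical square paths have length $2k$ rather than $4$. Again I would reduce from \textsc{edge coloring} for cubic graphs with $3$ colors. The idea is that each ``clam'' now becomes a longer gadget: the outer ``diamond'' becomes a pair of internally disjoint $A$--$B$ paths of length $k$, and the inner structure is similarly stretched so that the shortest square-forming path through an outer edge, a gill, and an inner edge has length exactly $2k$. To make the saturation lemma and the hypercube-coloring lemma do the work, I would saturate the relevant vertices by attaching plumes, bringing the degree up to $6k$ and forcing each length-$k$ bundle between two saturated vertices to behave like a shortest path in a $6k$-hypercube --- hence, by property (1) of the hypercube lemma, to consist of $k$ distinct colors, and by property (2), these $k$-color sets are what carries the ``state'' of the gadget.

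The key steps, in order, are: (i) describe the stretched clam gadget on $6k$ colors, specifying exactly which internal vertices get plumes so that they are saturated, and check there are no trivial repetitions; (ii) show that the outer $A$--$B$ bundle must use a fixed $k$-subset of the $6k$ colors (so the $6k$ colors partition into six blocks of size $k$ playing the roles of $a,\dots,f$ in Theorem 5), using the saturation lemma iteratively along the bundle exactly as in the hypercube uniqueness argument; (iii) show that a $2k$-square path that threads an outer bundle, a gill, and an inner bundle forces the inner bundle's color block to be disjoint from the outer block --- this is the $acac$ argument from Theorem 5 with each letter replaced by a length-$k$ run of distinct colors, which is where property (3), ``any distinct-color path is a shortest path,'' is needed to rule out unwanted short square paths; (iv) run the three-clams-at-a-vertex argument verbatim to conclude that at each vertex of $G$ the three incident stretched clams must use the three ``rotations'' of the block partition, giving a $3$-edge-coloring; and (v) conversely map a $3$-edge-coloring of $G$ to the three block-patterns to build a $6k$-coloring with squarefree $2k$-paths. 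Finally I would note membership in \textbf{NP} follows, as stated in the introduction, since there are at most $|V|^{2k}$ paths of length $2k$.

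The main obstacle I anticipate is step (iii): when the gadget edges carry $k$-color \emph{runs} rather than single colors, there are many more candidate square subpaths of length $\le 2k$, and I must verify that none of them is accidentally square for \emph{every} valid block-coloring (which would make the gadget uncolorable and break the ``if'' direction) while the intended ``bad'' $2k$-path \emph{is} square exactly when the blocks collide. Getting the internal lengths and the plume placements tuned so that the only length-$\le 2k$ square path is the intended $xx$ with $|x|=k$ --- and so that within a single correctly-colored clam every length-$\le 2k$ path is squarefree --- is the delicate bookkeeping. I expect the hypercube lemma to absorb most of this: inside a correctly colored clam the local structure embeds in the first few layers of a $6k$-hypercube, whose unique coloring is squarefree on \emph{all} paths, so the only way to create a square is to leave one clam and re-enter structure colored with an overlapping block at an adjacent clam, which is precisely the situation that encodes a proper-coloring violation in $G$. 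A secondary concern is ensuring the construction is uniform in $k$ and runs in polynomial time, but since $k$ is a fixed constant and the gadget has $O(k)$ vertices per edge of $G$, this is immediate.
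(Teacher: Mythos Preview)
Your step (ii) does not go through. The saturation lemma applies to a 4-cycle with two opposite saturated vertices, and the hypercube uniqueness argument can iterate it only because in a hypercube every pair of adjacent edges lies in a 4-cycle (a face). A pair of internally disjoint $A$--$B$ paths of length $k$ contains no 4-cycles at all, so there is nothing to iterate on; plume edges terminate at degree-$1$ vertices and cannot be extended to complete a forcing square, so plume-saturation does not create the needed diamonds either. Hence nothing forces the two $A$--$B$ paths to use the same $k$-set of colors, or even $k$ distinct colors each, and the block partition you rely on in (iii)--(iv) is never established. Your appeal to the hypercube lemma in (iii) fails for the same reason: a stretched clam is not a hypercube and does not inherit properties (1)--(3) merely by having its vertices saturated.

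The paper confronts exactly this obstacle and resolves it by building gadgets that \emph{are} hypercube pieces rather than stretched clams: the vertex gadget is the first $k$ layers of a $6k$-hypercube, and the consistency gadget is a $4k$-hypercube with certain edges deleted and plumes added, so that the 4-cycle structure needed for the saturation lemma is present throughout. This makes each gadget exponential in $k$ (total size $\Theta\!\left(\tfrac{a}{d}2^{2dk}\right)$ for a degree-$d$ input), polynomial only because $k$ is fixed --- very far from the $O(k)$-vertex gadget you envision. The analysis is also structured differently from your (iii)--(iv): the $6k$ colors are split into three groups of $2k$, a first claim shows the interface between two vertex gadgets must lie in a single group, the consistency gadgets force the other two groups to agree across the interface, and the ``if'' direction is a multi-case analysis of where a putative square of length at most $4k$ can sit relative to the gadgets. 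Your intuition that the local structure should ``embed in the first few layers of a $6k$-hypercube'' is in fact what the paper does literally, but as the definition of the gadgets, not as a property of a stretched clam.
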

  \begin{proof}
    This is a generalization of the previous reduction.  However, one must view
    the clam not as a unified gadget, but as a combination of elements of three
    gadgets.  The two edges incident to vertex $A$ belong to vertex gadget $A$,
    and likewise for vertex $B$.  The edge gadget proper, then, consists of the
    inner diamond and the gills.

    We again reduce from \textsc{edge coloring} for cubic graphs in 3 colors.
    Let $G$ be a cubic graph.  We associate to $G$ a 3-coloring $C(e)$ of the
    edges.  As we construct the gadgets in the reduction, we will also
    construct a $6k$-coloring of the edges based on $C(e)$.  Later, we will
    show that this coloring is $4k$-nonrepetitive iff $C(e)$ is a valid edge
    coloring, and that no other coloring of the edges is $4k$-nonrepetitive.
    We shall now enumerate the elements of the reduction, which reduces $G$ to
    a graph $H$.

    The vertex gadget consists of the first $k$ layers of a $6k$-hypercube with
    respect to a central vertex $K$.  By the hypercube lemma, this has a unique
    nonrepetitive coloring.

    Now let us divide the $6k$ colors into three groups of $2k$, called
    \emph{red}, \emph{green}, and \emph{blue}.  We shall use the term
    \emph{grue} to refer to the union of the sets blue and green.  The elements
    of the sets will be referred to as \emph{shades}.  For each of the groups,
    there are ${2k \choose k}$ vertices whose distance from $K$ is $k$ such
    that the path to them from $K$ consists of edges of colors from that group.
    We identify these vertices with an analogous group from another vertex
    gadget (with central vertex $L$) in order to form an edge.

    \begin{claim}
      In a $4k$-nonrepetitive coloring, the colors of the edges on either side
      of this connection are in the same set of $2k$.
    \end{claim}
    \begin{proof}
      Suppose we have a $4k$-nonrepetitive coloring that does not have this
      property.  Then take an arbitrary path from $K$ to $L$ of length $2k$,
      say with color pattern $r_1, \ldots, r_k, b_1, \ldots, b_k$.  By the
      properties of the $6k$-hypercube, there must be a path starting with $K$
      of pattern $b_k, \ldots, b_1$ and a path starting with $L$ of pattern
      $r_1, \ldots, r_k$.  This forms a square path of length $4k$, which must
      then be a cycle.  This means that the sets of colors on either side of
      the inter-vertex interface are the same.  Assume they are both red.
      Then, since the fully-red paths from $K$ to $L$ form a $2k$-hypercube,
      they can be colored nonrepetitively.
    \end{proof}

    We assume now that we are working with two vertices, $K$ and $L$, that are
    joined by a red edge.  The construction for the other two sets is
    analogous.

    It remains to add a gadget that will force the green and blue sets of one
    vertex to be the same as the green and blue sets of the other.  To this
    end, we take two vertices, $A$ and $B$, on the interface between two vertex
    gadgets, whose distance from each other is $2k$, to be two ends of a
    consistency gadget.

    The consistency gadget is a $4k$-hypercube modified in two ways.  First,
    every vertex whose distance from $A$ and $B$ is 2 or more has a
    \emph{plume} of $2k$ additional edges leading to disconnected vertices.
    A red edge separated by one edge from $A$ or $B$ generates a square path of
    length 4, so this means all the vertices of the hypercube are effectively
    saturated.  This implies that the hypercube is entirely grue and the plumes
    are red.  On the other hand, the farther red plume edges do not generate
    such a path, since it would have to repeat a red edge followed by two or
    more grue edges as a path from the red interface vertices, and thus
    required by hypercube property (3) to stray a distance more than $k$ from
    the central vertex of the vertex gadget.

    Next, for every path starting at $A$ or $B$ and consisting of $k$ edges of
    distinct shades of green, we remove the last edge.  We then do the same for
    blue.  We also remove all red plume edges incident to the vertices that are
    left unsaturated by this removal.  We call the $k$th layer of the hypercube
    from either $A$ or $B$ the \emph{gap layer}.

    \begin{claim}
      This consistency gadget is uniquely $4k$-nonrepetitively $6k$-colorable
      up to isomorphism.  This coloring is such that the hypercube edges are
      grue and colored as the restriction of a full hypercube, and the plumes
      are red.
    \end{claim}
    \begin{proof}
      Any diamond containing edges in the layer adjacent to the gap layer has
      two saturated vertices at equal distance from the vertex gadget.  Suppose
      there is an edge between two vertices that are both unsaturated due to
      the removal.  If those two vertices both became unsaturated from the
      removal of edges of the same color, then the edge would have to be of
      that color, and thus also removed.  On the other hand, an edge between
      two vertices that are unsaturated due to the removal of different colors
      would not exist by hypercube property (2).  Thus any diamond that has
      edges in the gap layer has saturated vertices either laterally or
      vertically, and so by the saturation lemma the coloring is a restriction
      of the full hypercube.
    \end{proof}

    We then add a second, identical consistency gadget at vertices $C$
    and $D$ whose distances from $A$ and $B$ are at least 4.   (This distance
    prevents a square path consisting of a red plume edge in the first
    consistency gadget, then some path through it, then two red edges, the same
    path through the second consistency gadget, and another red plume edge.)

    We refer to the set of colors that form the interface between two vertex
    gadgets as an \emph{edge set}.

    \begin{claim}
      If the original graph is not 3-edge-colorable, the resulting graph is not
      $4k$-nonrepetitively $6k$-edge-colorable.
    \end{claim}
    \begin{proof}
      Given a red edge set, suppose that another edge set of the same vertex
      gadget with central vertex $K$, belonging to edge $e$ of $G$, has both
      green and blue parts.  This means that that one of its consistency
      gadgets will have a path to it that contains both green and blue, since
      if one of the consistency gadgets has an all-blue side and an all-green
      side, then the other one does not.  Thus we can find a square path
      consisting of:
      \begin{enumerate}
	\item a red plume edge that duplicates the last edge of (3),
	\item a path of length $k$ in a consistency gadget of the red edge that
	  duplicates (4),
	\item a path from the consistency gadget of the red edge set to $K$,
	\item a mixed blue-green path from $K$ to one of the blue-green edge
	  set's consistency gadgets, and
	\item a path of length $k-1$ in this consistency gadget that duplicates
	  the first $k-1$ edges of (3).
      \end{enumerate}
      Therefore, the edge sets cannot be of mixed colors with respect to the
      red edge set's consistency gadget.
    \end{proof}

    \begin{claim}
      If $G$ is 3-edge-colorable, then the coloring of $H$ which we have
      described is $4k$-nonrepetitive.
    \end{claim}
    \begin{proof}
      Suppose that $G$ is 3-edge-colorable, and let $p=qr$ be a square path in
      $H$ of length at most $4k$, where $q$ and $r$ are the repetitions of the
      pattern.  The portion of $p$ inside a gadget $\mathcal{G}$ will be called
      $p_{\mathcal{G}}$.  Clearly, $p$ cannot be wholly inside a consistency
      gadget.  We have also already established that $p$ cannot be wholly
      outside any consistency gadgets.  Thus it include edges from both
      consistency and vertex gadgets.  Since a consistency gadget has distance
      $4k$ between its two attachment points, $p$ cannot have a consistency
      gadget in the middle.  In introducing the red plumes in the consistency
      gadget, we also showed that the whose middle vertex of $p$ cannot be a
      junction between a consistency gadget and a vertex.

      Thus at least one half-path of $p$ (without loss of generality, $q$)
      contains parts of both a consistency gadget $T$ and a vertex gadget $J$
      with center $K$.  $q_T$ contains at most one of the colors that are found
      on a shortest path to $K$ from the juncture with the consistency gadget.
      If such a color is in $q_T$, it must be a plume, which forces $q_T$ to be
      at least three edges long.

      Suppose now that the rest of $p$ is contained in $J$.  We will show that
      the distance from the last vertex $V$ of $p$ to $K$ is greater than $k$,
      a contradiction.  If a color has both of its repetitions inside $J$,
      these cancel each other out, so we are left with the edges that repeat
      $q_T$.  But since there is at least one more grue than red edge in $q_T$,
      this means that our initial distance $k$ from $K$ increases by at least
      one.

      This means that $p$ must access either a consistency gadget or another
      vertex gadget.  Without loss of generality, assume that $T$ belongs to a
      red edge set.

      \emph{Case 1}: $p$ accesses a second consistency gadget that belongs the
      same edge set.  Then there are at least four shades of red that are
      repeated an odd number of times in the vertex gadget portion of the path.
      Since there is at most one red edge in each consistency gadget part of
      the path, nothing can repeat these edges.

      \emph{Case 2}: $p$ accesses a different consistency gadget, without loss
      of generality from a blue edge set.  Then $p_J$ must contain at least
      $2k$ colors that color an odd number of edges, since the distance between
      the two consistency gadgets is $2k$.  This is the maximal number of
      distinct colors in the path we are considering, so each of the $2k$ must
      be repeated in one of the two consistency gadgets.  This means that one
      of the consistency gadgets contains a red-blue path of length at least
      $k$.  However, this must pass through edges that we have removed, so no
      such path exists.

      \emph{Case 3}: $p$ accesses a vertex gadget $J^\prime \neq J$ which
      interfaces $J$ at a non-red edge set.  This again means that there are
      $2k$ distinct colors in $p_J$, and the $2k$ edges in $p_T$ and
      $p_{J^\prime}$ must repeat these.  In particular this means that the edge
      that repeats the edge of $p_J$ adjacent to $T$ must be the edge of
      $p_{J^\prime}$ adjacent to $J$, and so this edge must be colored a shade
      of red, which is a contradiction.

      \emph{Case 4}: $p$ accesses a vertex gadget $J^\prime$ which interfaces
      $J$ at the red edge set.  Each time that $p_J$ or $p_{J^\prime}$
      traverses a grue edge before the last vertex in $p$ that belongs to both
      $J$ and $J^\prime$, another edge of the same color must be accessed, by
      hypercube property (3).  Since there must be an even number of edges of
      any color, this means that unrepeated edges in the consistency gadget
      must be repeated by similar edges in the portion after this last
      crossover.  Also, since all but at most one shade of red has to be
      traversed twice within the non-consistency-gadget portion, all the red
      colors together bring the path to within one edge of the border of a
      vertex gadget.  But in that case there are at least two grue edges in the
      consistency gadget portion, again pushing the path beyond the radius of
      the vertex gadget.

      Thus there is no way for such a $p$ to exist.
    \end{proof}

    This means that the resulting graph may be $4k$-nonrepetitively
    $6k$-edge-colored iff the original graph can be 3-edge-colored.  Note that
    there is nothing special about the original graph being of degree 3.  If
    the original graph is regular of degree $d$, then we can construct a
    similar new graph with a vertex gadget being a slice of a $2dk$-hypercube,
    a consistency gadget being a modified $2(d-1)k$-hypercube, and so on.
  \end{proof}

  Taking into account the vertex and consistency gadgets, for an original graph
  with $a$ edges, the resulting graph of this reduction has $\Theta(
  \frac{2a}{d} \cdot 2^{2dk}+a \cdot 2^{2(d-1)k})=\Theta(\frac{a}{d}2^{2dk})$
  edges.  This means that if $k$ depends logarithmically on $a$, then the
  reduction still runs in polynomial time with respect to $a$.  This raises
  questions about other versions of the problem where $k$ is not constant, but
  a sub-linear function of $\lvert G \rvert$.

  \section{THUE NUMBER}
  We now proceed to the main result of the paper, showing that \textsc{Thue
    number} is $\mathbf{\Sigma_2^p}$-complete.  We use a reduction based on
  that of Theorem 3, taking a graph very similar to that produced in Theorem 3
  and replace the edges with various gadgets depending on their function in
  that reduction.  We also add constraint gadgets in order to force the edge
  gadgets that would have been colored the same in Theorem 3 to still have
  similar patterns of colors.  After we show that a nonrepetitive coloring of
  this graph must satisfy a large number of constraints stemming from this
  local structure, we can use the global structure from Theorem 3 to show that
  an actual nonrepetitive coloring exists if and only if the ancestral instance
  of co-$\forall\exists$3SAT is positive.

  While this is the idea of the proof, in reality, the local structure is not
  cleanly separated from the global structure: the format of the latter helps
  constrain the former.

  \begin{thm}
    \textsc{Thue number} is $\Sigma_2^p$-complete.
  \end{thm}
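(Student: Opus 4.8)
The plan is to reduce from co-$\forall\exists$3SAT, which is $\Sigma_2^p$-complete, by reusing the global structure built in Theorem 3 and replacing each of its edges by a gadget assembled from the hypercube, plume, and saturation machinery of Theorems 5 and 6. Membership in $\Sigma_2^p$ is immediate: a nonrepetitive $k$-coloring is a polynomial-size certificate, and by Theorem 2 verifying that a given coloring is nonrepetitive lies in \textbf{coNP}, so ``does there exist a $k$-coloring all of whose open paths are squarefree'' has the required $\exists\forall$ form.

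For hardness, let $\forall x\,\exists y\,f(x,y)$ be an instance of $\forall\exists$3SAT and let $G_0$ be the edge-set-restricted graph that Theorem 3 produces for it; recall that in $G_0$ only the two branch-separating edges of each universally quantified variable gadget and the matching snout edge carry two-element color sets, while every other edge carries a singleton $\{w_j\}$. We build a graph $H$ and a number of colors $k$ as follows. Each edge of $G_0$ that carries a singleton is replaced by a \emph{link gadget}: a slice of a $k$-hypercube, with plumes attached so that all of its interior vertices are saturated, whose two attachment points sit at the distance we want its pass-through pattern to have. By the hypercube lemma such a link has a nonrepetitive coloring unique up to permutation, so the pattern of coincidences of colors along its pass-through path is forced; we give links corresponding to distinct $w_j$ pairwise distinguishable patterns (e.g.\ distinct lengths), so that two links can pair up in a square path only if they played the role of the same color in Theorem 3, and then only if they carry literally the same sequence. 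For the two-element edges we use a \emph{binary gadget} — a link-like gadget together with a short clam- or diamond-derived selector forced by the saturation lemma into exactly two inequivalent colorings — so that the global coloring must make one of two choices, synchronized between the snout edge and the branch edges exactly as the sets $\{w_i^t\}$, $\{w_i^f\}$, $\{w_i^t,w_i^f\}$ did in Theorem 3. Finally, wherever two edges of $G_0$ carried the same color $w_j$ we splice in a \emph{consistency gadget} of the kind built in Theorem 6, with enough red/grue-style plumes and inter-gadget distance padding to destroy the obvious cross-gadget square paths, forcing the two links to realize the same pattern. Call the result $H$.

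The correctness proof has two directions. For soundness one argues that every nonrepetitive $k$-coloring of $H$ is forced into \emph{canonical form}: each link gadget is colored as a restriction of a hypercube, each binary gadget takes one of its two colorings, and each consistency gadget is colored as in the claims of Theorem 6; such a coloring then descends to a coloring of $G_0$ respecting every set $S_e$, so by the analysis of Theorem 3 it exists precisely when there is an assignment to $x$ making $\exists y\,f(x,y)$ unsatisfiable — i.e.\ precisely when the co-$\forall\exists$3SAT instance is positive. For completeness one starts from such an assignment to $x$, colors each binary gadget according to it, colors every other gadget in its canonical way, and checks that the only remaining candidate square paths are the ``global'' ones already excluded in the proof of Theorem 3.

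The main obstacle is the soundness step, and the difficulty is exactly the failure of the local and global structure to separate cleanly: a perverse coloring might try to avoid forming any square path by combining a miscolored gadget with the surrounding snout/variable/clause skeleton. Ruling this out calls for a case analysis in the spirit of the four-case argument that closes Theorem 6, but larger, over where the center vertex of a hypothetical square path can sit — inside a link, at a link/consistency junction, straddling a vertex gadget, or spanning part of the global skeleton — in each case invoking hypercube property (3) together with the saturation and plume arguments to push the path past the radius of some gadget or to expose a color that cannot be repeated. Choosing the plume multiplicities and the inter-gadget distances large enough to make every case go through, while keeping $H$ polynomial in the size of the formula, is where the real work of the theorem lies.
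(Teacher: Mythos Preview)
Your high-level plan---reduce from co-$\forall\exists$3SAT via Theorem 3, replace each edge by a hypercube-based gadget, and glue in consistency gadgets to force edges that shared a color in $G_0$ to share a pattern in $H$---matches the paper. But the proposal stops at the plan, and the concrete mechanisms you sketch are not the ones that make the paper's proof go through. The paper does \emph{not} distinguish edge gadgets by ``pairwise distinguishable patterns (e.g.\ distinct lengths)''; every ordinary edge becomes the \emph{same} 7-hypercube. Consistency is enforced not by a Theorem-6-style hypercube but by a $2^m$-clique (for $m=4\ell+3$ with $2^\ell$ exceeding the number of original colors) with two-layer hypercube plumes at every vertex: the clique forces $2^m-1$ colors to be used on its edges, leaving exactly seven colors available for the attached edge gadgets, and the Alon et al.\ group coloring of the clique by nonzero elements of $\mathbb{Z}_2^m$ is what lets one argue that any attempted square path through a consistency gadget collapses to a cycle. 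This algebraic ingredient---the colors live in $\mathbb{Z}_2^m$, and the seven colors $a,b,c,d,ab,ac,ad$ of each edge gadget are chosen so that the relevant 3- and 4-subsets sum to zero---is the backbone of both directions of the proof and is entirely absent from your sketch.

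The encoding of the universal variables is also different and more delicate than a ``clam-derived selector with two colorings.'' The paper replaces the three edges of each universal variable (after first padding them to triples $w_i^I\,w_i^{0/1}\,w_i^F$) by three distinct gadgets $C^i$, $N^i$, $P^i$ sharing a single consistency clique $Q^{X_i}$: $P^i$ is a 7-hypercube with a few edges deleted and two extra nearly-saturated vertices $u_{-1},u_{-2}$; $N^i$ is a 3-cube glued to a 4-cube; and $C^i$ is the Fig.~4 ``2-1-1-3'' gadget. The point is that $C^i$ admits exactly two nonrepetitive colorings, one whose $v_0$--$v_1$ paths coincide with paths of $N^i$ and one whose paths coincide with paths of $P^i$, never both (this is Claim~2 inside the proof). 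Establishing that \emph{every} nonrepetitive coloring of $H$ forces this picture (Claim~1) is a long argument that walks out layer by layer from the clique vertices $u_i$, repeatedly exhibiting explicit length-8 or length-10 square paths through the clique to rule out stray colors, and then uses the $\mathbb{Z}_2^m$ identities to pin down the $C/N/P$ colorings. The completeness direction (Claims~4--6) is a separate case analysis over how a hypothetical square path could thread consistency gadgets, again leaning on the group structure. None of this is the Theorem-6 case split you point to, and without the clique-plus-group mechanism there is no evident way to carry out your soundness step.
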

  \begin{proof}
    Given an instance $\forall x \exists y\,f(x,y)$ of co-$\forall\exists$3SAT,
    we will refer to the instance $(G,\mathcal{S})$ of \textsc{restricted Thue
      number} constructed in Theorem 3.  Let $c=\lvert\bigcup\mathcal{S}\rvert$
    be the number of colors coloring $G$, $u=\lvert y \rvert$, and let $\ell$
    be the smallest integer such that $2^{\ell} \geq c+u+1$ and $m=4\ell+3$.

    Let $w_i^0,w_i^1$ be the two colors corresponding to the universally
    quantified variable $x_i$.  Then we modify $G$ by introducing new colors
    $w_i^I$ and $w_i^F$ and replacing each edge colored by $A \subseteq
    \{w_i^0,w_i^1\}$ by three edges in series colored $\{w_i^I\}A\{w_i^F\}$.
    Clearly, this graph $(G',\mathcal{S}') \in$ \textsc{restricted Thue number}
    iff $(G,\mathcal{S})$ is.  We construct from this graph an instance $(H,
    2^m+6)$ of \textsc{Thue number}.

    Let $S^*$ be the set of colors that are not $w_i^j$ for $j \in 2$.  For
    every edge $e$ of $G'$ whose color is from $S^*$, we add to $H$ a
    7-hypercube $E^e$ with two opposing vertices ($v_0$ for the one closer to
    the tip of the snout along a potentially square path and $v_1$ for the
    other one) acting as the vertices of the edge gadget.  For each color $s
    \in S^*$ and for each pair $X_i=\{x_i^0,x_i^1\}$, we also add a consistency
    gadget $Q^s$ (respectively $Q_X^i$).  This consists of a $2^m$-clique with
    every vertex identified either with the central vertex of the first two
    layers of a 7-hypercube, called the \emph{plume} emanating from this
    vertex, or with a vertex of an edge gadget.  For each edge $e$ whose color
    is $s$, two vertices of $E^e$ are identified with vertices of $Q^s$: $u_0$
    for one that is distance 3 from $v_0$ and $u_1$ for one that is distance 4
    from $v_0$.
    \begin{figure}
      \begin{center}
	\includegraphics{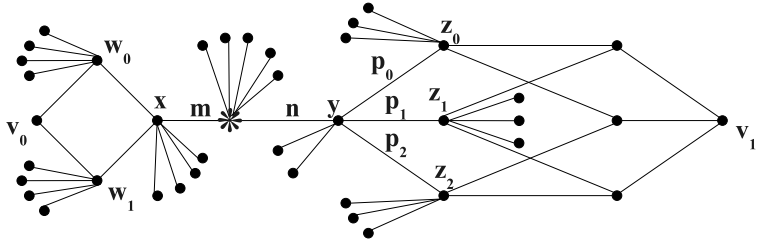}
      \end{center}
      \caption{A $C$-gadget.  It may be thought of as a sequence of
	strung-together hypercubes of dimensions 2, 1, 1, and 3, together with
	plumes that inflate the degree of some vertices.}
    \end{figure}
    \begin{figure}
      \begin{center}
	\includegraphics{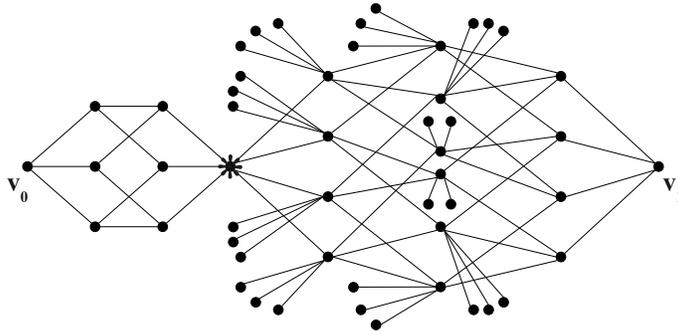}
      \end{center}
      \caption{A $C$-gadget.  It may be thought of as two hypercubes of
	dimensions 3 and 4 together with plumes that inflate the degree of some
	vertices.}
    \end{figure}

    For the edges colored by subsets of $\{x_i^0,x_i^1\}$, the choice edge in
    the snout becomes the gadget depicted in Fig.~4, which we call a
    \emph{$C$-gadget} or $C^i$.  The negative edge in the variable gadget
    becomes the gadget depicted in Fig.~5, which we call an \emph{$N$-gadget}
    or $N^i$.  In both cases, the asterisked vertex $u$ is identified with a
    vertex of $Q^{X_i}$ and $v_0$ and $v_1$ are connected to other edge
    gadgets.

    Finally, the other edge becomes a \emph{$P$-gadget} $P^i$.  This is a
    7-hypercube attached to the consistency gadget in the same way as $E^e$,
    but is modified in several ways.  In order to explain this, let us first
    label the dimensions of the hypercube $a,b,c,d,ab,ac,ad$.  (This labeling
    will be explained later.)  Let $a,b,ab$ span the distance from $u_0$ to
    $v_0$, and, correspondingly, let $c,d,ac,ad$ span the distance from $u_0$ 
    to $v_1$.  Let us eliminate the edges $c,d,ac,ad$ adjacent to $v_1$.  We
    then nearly saturate the vertices that are separated from $u_0$ by the
    paths $(a,c,ac)$ and $(a,d,ad)$ (call them $u_{-1}$ and $u_{-2}$
    respectively) with $2^m-2$ additional edges, but then remove the original
    edges corresponding to $b$ from both and $c$ and $d$ from $u_{-1}$ and
    $u_{-2}$ respectively.

    Now that we have constructed a graph, we will attempt to color it
    nonrepetitively with $k$ colors, and show that it has a nonrepetitive
    coloring iff the instance of $\forall\exists$3SAT is negative.  Let us call
    the set of colors $\mathcal{K}$ and call the color of an edge $e$ under a
    hypothetical coloring $C(e)$.  Conversely, we denote by $[c]$ an edge of
    color $c$ if its more precise identity is irrelevant.

    \begin{claim2}
      Let $e_1,\ldots,e_n$ be the set of edges of color $s \in S^*$.  Then
      in a nonrepetitive coloring, $E^{e_i}$ must all be colored with the same
      7 colors.  Similarly, for every $i$, $C^i$, $N^i$, and the hypercube
      portion of $P^i$ must all be colored with the same 7 colors.
    \end{claim2}
    \begin{proof}
      Suppose we have a nonrepetitive coloring $C(e)$.

      Inside each consistency gadget $Q$, each vertex of the $2^m$-clique is
      saturated.  Let $u$, $v$, and $w$ be such vertices, and suppose that
      $C(uv)$ is the same as the color of the edge of a plume adjacent to $w$.
      Then there is an edge of color $C(uw)$ adjacent to $v$.  Together with
      $uw$, these four edges then form a square path.  To avoid this, the set
      $Z(Q)$ of 7 colors that colors the first layer of each plume must be
      disjoint from the set $G^*(Q)$ of $2^m-1$ colors that colors the clique.
      By a similar argument, the second layer of every plume must also have a
      color in $Z(Q)$.  According to \cite{alon}, the coloring can be
      constructed in such a way that $G^*(Q)$ consists of the nonzero elements
      of the group $Z_2^m$.  (From now on we will refer to just $Z$ and $G^*$
      when $Q$ is clear from the context.)

      By the same token, the edges in the edge gadgets of distance at most two
      from $u_i$ must be colored with colors from $Z$.  Now suppose we have an
      edge $e$ that has distance 3 from $u_i$.  If $C(e) \in G^*$, then the
      path consisting of $e$ and the two edges that connect it to $u_i$ can be
      duplicated inside the consistency gadget, so $C(e) \in Z$.

      The maximum distance of an edge in any edge gadget from the closer $u_i$
      is 4.  Let us first consider a gadget $E^e$, and let $Z=\{c_1,\ldots,
      c_7\}$.  We know that edges in the first three layers from $u_0$ have
      colors in $Z$, so they must be arranged according to the hypercube lemma.
      Let $w_{ijk}$ be separated from $u_0$ by edges $[c_i][c_j][c_k]$.  Let
      $e_4$ be the edge incident to $w_{123}$ that would be colored $c_4$ if
      $E^e$ were nonrepetitively colored by $Z$, and let $e_3$ be the edge
      incident to $w_{124}$ that would be colored $c_3$.  These edges are
      adjacent.

      Now we have four cases, since both $C(e_3)$ and $C(e_4)$ can be either in
      $Z$ or in $G^*$.  Suppose that $C(e_3), C(e_4) \in Z$.  Then if it is not
      true that $C(e_3)=c,C(e_4)=d$, and we don't have a trivial repetition,
      then we have a square path of the form $[c_1][c_2]e_4e_3[c_1][c_2]
      [C(e_4)][C(e_3)]$.  So suppose that $C(e_4) \in G^*$.  But we have a
      square path $[C(e_3)][C(e_4)][c_2][c_4][c_1]e_3e_4[c_2][c_4][c_1]$ that
      starts with two edges of the consistency gadget.  Finally, if $C(e_4) \in
      Z$ but $C(e_3) \in G^*$, we have the analogous path $[C(e_4)][C(e_3)]
      [c_2][c_3][c_1]e_4e_3[c_2][c_3][c_1]$.

      Hence in a nonrepetitive coloring the fourth layer is also colored by $Z$
      according to the hypercube lemma.  Since we can prove the same thing
      starting from $u_1$, this means the colorings from the two $u_i$s match,
      and so the entire hypercube is colored in this way.

      This is also valid for the $P$-gadget.  Let us start from $u_1$.  None of
      the removed edges has distance less than 4 from this vertex, and six have
      distance 4.  Since none of them are joined at a vertex whose distance
      from $u_1$ is 4, we can apply the previous argument to the ones that are
      not removed.  This leaves only the edges whose hypercube distance from
      $u_0$ is at most 3; but since no removed edges have distance less than 3
      from $u_0$, this is the same as their actual distance, so we are done.

      If the color of the edge in $Q^X$ between $u_0$ and $u_1$ is represented
      in the plume of $u_{-j}$, there is a square path of length 8 going
      through $u_{-j}$, $u_0$, and $u_1$.  Hence this must be the one color in
      $G^*$ that is missing from the plume.

      In the other two gadgets, all the edges except the clump closest to $v_1$
      have distance at most 3 from the vertex $u$ attaching them to $Q^X$, and
      thus are known to have colors from $Z$.

      Now, in $P^X$ we have paths $\tau_0=[a][b][ab],\tau_1=[c][d][ac][ad],
      \tau_2=[b][c][ab][ac]$, $\tau_3=[b][d][ab][ad]$ each between $u_{\pm j}$
      and $v_0$, and paths $\sigma_0=[a][b][ab],\sigma_1=[c][ac][a],\sigma_2=
      [d][ad][a]$ each between $u_{\pm j}$ and $v_1$.  Now, vertex $v_0$ is
      also vertex $v_1$ of an $E$-gadget $E^{e'}$, and the color of $e'$ in $G$
      is $x_i^i$.  Some path $[k_0][k_1][k_2]$ leads us to the vertex $u_1$ of
      this gadget.  Now, this means that we have a path $[k_2]\tau_i[k_0][k_1]
      [k_2][C(\tau_i)][k_0][k_1]$ through the two gadgets and $Q^{x_i^i}$,
      which a nonrepetitive coloring would require to have a loop.  Hence in
      the group $G=(G^*(Q^{x_i^i}) \cup \{0\},+) \cong Z_2^m$ we must have $a+
      b+ab=0$.  Furthermore, since a possible $T_i$ is $(d,b,ab,ad)$, having a
      loop of size 3 means that $b+ab+d=0$ or $b+ab+ad=0$, which contradicts
      the previous statement.  Hence we have $b+d+ab+ad=0$, and by the same
      argument $b+c+ab+ac=0$.  By addition we then have the statements $c+d+ac+
      ad=0$, $a+c+ac=0$, and $a+d+ad=0$.  It is easy to see that we can create
      no more such identities using colors from $Z(Q^X).$  By a similar
      argument, each $\sigma_i$ must form a loop in $Q^{x_i^f}$, which gives us
      the same six identities.

      By construction, the other two gadgets attached to $Q^X$ are also
      preceded by $x_i^i$ edge gadgets, so each path from $v_0$ to $u$ in these
      gadgets must have one of the three sets of colors $\{a,b,ab\}$, $\{a,c,
      ac\}$, $\{a,d,ad\}$.

      Now, in $N^X$, the edges whose distance from $u$ is at most 3 are known
      to be in $Z(Q^X)$.  Furthermore, the two paths $[ac][c][a]$ and $[ad][d]
      [a]$, if they were present starting from $u$, would create a square path
      where an edge adjacent to some $u_{-j}$ of $P^X$ repeats the edge between
      $u_0$ and $u$.  Now, if $A$ is the set of colors coloring the cube in
      $N^X$, it must be that $\sum A=0$ in $G(Q^{x_i^i})$.  Thus it must be
      that $A=\{a,b,ab\}$ and that the vertices of the 4-hypercube that are
      adjacent to $u$ and those that are distance 2 away are effectively
      saturated.  Hence by the saturation lemma all of the 4-hypercube, except
      maybe the edges adjacent to $v_1$, is colored from $\{c,d,ac,ad\}$.  But
      to prevent a square path through $Q^{x_i^f}$, these last edges must also
      be colored from that set.

      We apply a slightly different argument to $C^X$.  To prevent a repetition
      going through $u_{-j}$ and $u_1$ of $P^X$, either the set $A_0$ of colors
      of the paths from $v_0$ to $u$ must be $\{a,ab,b\}$, or the edge labeled
      $n$ must be colored $b$.  Furthermore, if $A_0=\{a,ab,b\}$, to prevent a
      repetition going through $u_{-j}$ and $u_0$, the edge labeled $n$ must be
      colored $c$ or $d$, and otherwise the edge labeled $m$ must be colored
      $c$ or $d$.  The plumes at $w_i$ and $y$ clearly cannot contain an edge
      of the same color as $m$, and the plumes at $z_i$ cannot contain an edge
      of the same color as $n$.  Thus these vertices are effectively saturated,
      and we can apply the saturation lemma to the faces of the cube closer to
      $u$.  Furthermore, if some $p_j$ is the same color as an edge in the
      square, then we have a square path of length 6 through the corresponding
      $w_i$ and $z_k$.  Thus except for the edges adjacent to $v_1$, we know
      that the paths from $u$ to $v_1$ must be colored from the set $A_1=Z(Q^X)
      \setminus A_0$.  Now, by the argument above, $G(Q^{x_i^i})$ must have
      $\sum A_0=0$, and thus so does $G(Q^{x_i^f})$.  So $\sum A_1=0$ in $G(
      Q^{x_i^f})$.  Then to prevent a square path through $Q^{x_i^f}$, the last
      edges of paths from $u$ to $v_1$ must also be colored from $A_1$.
    \end{proof}
    \begin{claim2}
      Given a nonrepetitive coloring of $H$, for every existentially quantified
      variable $x$, there is a path from $v_0$ to $v_1$ of $C^X$ that has the
      same pattern of colors as a path from $v_0$ to $v_1$ of either $N^X$ or
      $P^X$, but not both.
    \end{claim2}
    \begin{proof}
      Let us assume the coloring of $P^X$ that we have implied: we have already
      shown that this is unique up to permutation.  We have also already shown
      that given this coloring, $N^X$ must have colors $a,b,ab$ in the cube and
      $c,d,ac,ad$ in the hypercube, and that $C^X$ must have one of the
      zero-sum sets of 3 colors on the shorter side and one of the zero-sum
      sets of 4 colors on the longer side.

      If the shorter set is $\{a,b,ab\}$ then clearly there is a path through
      the gadget that is the same as a path through $N^X$.  On the other hand,
      there is not a path that is the same as a path through $P^X$, since no
      path through $P^X$ ends in $c,d,ac$, or $ad$.

      Otherwise the gadget contains either the path $(a,ac,c,b,d,ad,ab)$ or the
      path $(a,ad,d,b,c,ac,ab)$.  The removed edges in the $P$-gadget are all
      in either fourth or last position.  Now, the edge $ab$ adjacent to $v_1$
      is present.  Furthermore, the removed $b$-edges are preceded by the sets
      $\{ab,c,ac\}$ and $\{ab,d,ad\}$, so they are not the $b$-edges in these
      paths.  Hence, these paths are present in $P^X$.  Since the first three
      colors cannot be a permutation of $\{a,b,ab\}$, there is no path that is
      the same as a path through $N^X$.
    \end{proof}
    \begin{claim2}
      If $\forall x \exists y\,f(x,y)$, then every coloring of $H$ has a square
      path.
    \end{claim2}
    \begin{proof}
      Let us assume that we have a nonrepetitive coloring.  It must then agree
      with claims 1 and 2.  From claim 1, the path $c_1 \ldots c_7$ from $v_0$
      to $v_1$ exists for all the $E$-gadgets.  On the other hand, claim 2
      tells us that a nonrepetitive coloring of a $C$-gadget is effectively a
      choice between the corresponding $N$-gadget and $P$-gadget.  So by
      choosing a $y$ that satisfies $f$ for the choice of $x$ provided by the
      coloring, we can create a long square path analogous to the one for $G$.
    \end{proof}
    Now, assuming that $\exists x\forall y\,\not f(x,y)$, we construct a
    coloring we claim to be nonrepetitive.  Let $x_1,\ldots,x_p$ be an
    assignment to the universally quantified variables such that $\exists y\,f(
    x,y)$ is not satisfiable.  We set aside 7 colors $a_i,b_i,c_i,d_i,ab_i,
    ac_i,ad_i$ for each consistency gadget $Q^{w_i}$, all of them distinct.
    We color the $E$-gadgets so that for one, $\{a_i,b_i,ab_i\}$ spans the
    distance from $u_0$ to $v_0$ and $\{c_i,d_i,ac_i,ad_i\}$ spans the distance
    from $u_0$ to $v_1$; for another, $\{a_i,c_i,ac_i\}$ and $\{b_i,d_i,ab_i,
    ad_i\}$; and for the third, $\{a_i,d_i,ad_i\}$ and $\{b_i,c_i,ab_i,ac_i\}$.
    If $G'$ has a fourth edge of that color, then we make sure that the
    coloring of the gadget corresponding to the edge in the snout is different
    from all the others, but color the fourth edge gadget in one of these three
    ways.

    For a consistency gadget that controls a universally quantified variable
    $x_j$, we color $N^{x_j}$ and $P^{x_j}$ in the way already described.  We
    then color $C^{x_j}$ so that it contains the path $(a_i,b_i,ab_i,c_i,d_i,
    ac_i,ad_i)$ if $x_j=1$ and the path $(a_i,ac_i,c_i,b_i,d_i,ad_i,ab_i)$ if
    $x_j=0$.

    Let us set aside a set of 7 colors $\beta_1,\ldots,\beta_7$ distinct from
    the ones enumerated so far.

    Now we need to color the consistency gadgets, which we do using the group
    method in \cite{alon}.  First, we assign the elements of the subgroup of
    $Z_2^m$ generated by the first three generators to the set containing each
    edge gadget's $u_0$ and $u_1$, whose cardinality clearly is at most 8.
    Furthermore, we set the numbers of the vertices so that the sum of the
    colors of $u_0$ and $u_1$ within an edge gadget does not depend
    on the edge gadget.  Thus the colors of the edges between these vertices
    are nonzero members of this subgroup, and the edge between $u_0$ and $u_1$
    is the same color for each edge gadget.  We assign to these edges the
    colors $\beta_1,\ldots,\beta_7$, with $\beta_1$ being the color between
    $u_0$ and $u_1$ for every $E$- and $P$-gadget.

    For a consistency gadget corresponding to a universally quantified variable
    $x_i$, we label the vertex $u_0$ of $P^{x_i}$ with 0 and the other three
    $u$-vertices with a generator of the subgroup.  This ensures that all the
    edges that connect two $u$-vertices are colored differently.

    Now, using the rest of the generators, we can find $2^\ell-1 \geq c+2u$
    subgroups $G^g$ of $Z_2^m$ isomorphic to $Z_2^4$, one for each nonzero
    element $g \in Z_2^{m/4}$, by taking the generators $g^0,g^1,g^2,g^3$ to
    have $i$th digit $g^j_i=g_{(j-i-3)/4}$ if $4 \mid j-i-3$ and 0 otherwise.

    We next arbitrarily assign group elements to the rest of the vertices, and
    hence edges.  However, we assign colors to the group elements in such a way
    that for each set $Z(Q^i)$ we take a distinct subgroup of the kind
    described and assign $a_i,b_i,c_i,d_i$ to the generators and $ab_i,ac_i,
    ad_i$ to the elements $a_i+b_i,a_i+c_i$, and $a_i+d_i$, respectively.

    Since the first three digits are zero for all elements of these sets, we
    have guaranteed that we can never get $\beta_i$ by adding together colors
    of edge gadgets.

    \begin{claim2}
      This coloring is nonrepetitive when restricted to a consistency gadget
      $Q$ together with its edge gadgets.
    \end{claim2}
    \begin{proof}
      We have already shown that the coloring of each gadget is nonrepetitive.

      A path within $Q$ whose colors are contained in $Z(Q)$ cannot reach a
      vertex at a distance of more than 2 from the interface with colors in
      $G^*$.  On the other hand, by construction, no colors in $G^*(Q)$ are
      present in the edge gadgets within distance 2 of $u_i$.  Thus a portion
      of a square path within an edge gadget must be repeated at least
      partially in another edge gadget.

      For an $E$-gadget, this means that one such portion must not be at the
      end of a path.  Thus, since the path is open, it goes through the edge 
      gadget from $u_0$ to $u_1$, and contains an odd number of edges of each
      color in $Z(Q)$.  If some such portion is repeated by sections at both
      the beginning and end of the path, then one of these sections must be in
      an $E$-gadget, and therefore we can splice the other section onto that
      one to create another open square path.  Thus we can build a shorter
      square open path by replacing each such portion with the edge between
      $u_0$ and $u_1$, which we know has the same color $\beta_1$ for each
      gadget.  But this substitution produces a square path entirely inside
      $Q$, which must be a loop.

      Assume now that $Q$ corresponds to a universally quantified variable.  We
      have already constructed our coloring so that a path starting from one of
      the edges adjacent to $u_{-j}$ and continuing to either $C$ or $N$ cannot
      repeat.  Now, we did not color an edge adjacent to $u_{-j}$ by $\beta_1$,
      so a path from $u_0$ to $u_1$ that is repeated by edges within $P$ must
      be at least two edges long.  But a path from $u_{-j}$ in colors from
      $G^*(Q)$ may only be one edge long.  Now, the set of colors separating
      $u_{-j}$ and $u_i$ is different from the set of colors separating
      $u_{-3+j}$ and $u_{1-i}$.  Therefore, we cannot find a square path that
      traverses both.  So any square path must contain at least two sections
      within $Q$.  Such sections, if they are each between two $u$-vertices,
      cannot repeat each other since by our construction the group element
      leading between each pair of $u$-vertices is different.  But such a path
      also cannot use the plumes because any vertex in $N$ and $C$ has distance
      at most 4 from $u$, so this together with the plumes would not be enough
      to repeat a path through $P$ containing 7 distinct colors.
    \end{proof}
    \begin{claim2}
      In a square path, colors in edge gadgets cannot be duplicated by the same
      colors in consistency gadgets, except perhaps in one plume.
    \end{claim2}
    \begin{proof}
      Suppose we have a square path that serves as a counterexample to this,
      with some sequence of colors occurring in an edge gadget $E_0$ repeated
      by a sequence in a consistency gadget $Q_1$.  We give the name $p$ to
      this path and $p_\Gamma$ to the portion of it that goes through a gadget
      $\Gamma$.  We orient $p$ so that $p_{E_0}$ comes before $p_{Q_1}$; we
      therefore refer to the first and second halves of $p$, and to ends and
      beginnings of sections.

      We start with \emph{case 1}: $E_0$ does not belong to $Q_1$ and $p_{E_0}$
      is not at the beginning of $p$.  Then the ends of $p_{E_0}$ are vertices
      from the set $\{u_{\pm j}\} \cup \{v_i\}$.  By construction, the colors
      of any such path form a loop in $Q_1$, so the path cannot be open.

      Thus if $E_0$ does not belong to $Q_1$, then $p_{E_0}$ is the beginning
      of $p$.  Then the edges following $p_{E_0}$ must come either from a
      consistency gadget $Q_0$ (\emph{case 2}) or another edge gadget $E^*$
      (\emph{case 3}.)

      \emph{Case 2}: In order to link the two half-paths, the first half must
      eventually leave $Q_0$.  If it leaves early, so that an adjacent edge
      gadget is repeated in $Q_1$, then we are back in case 1.  If it leaves
      late, so that it is repeated by an edge gadget adjacent to $Q_0$, then we
      are also back in case 1: if the path ended in that edge gadget, then
      there would be nothing to repeat the edge gadgets between $Q_0$ and
      $Q_1$.  Finally, if the vertex $u_i$ at which the path leaves $Q_0$ is
      repeated by a vertex $u_j$ at which the path leaves $Q_1$, then since we
      are in different consistency gadgets, the edges we traverse have
      different sets $Z$ of colors, a contradiction.

      \emph{Case 3}: If $E^*$ does not belong to $Q_1$, then it must be
      duplicated inside $Q_0$, since the other option is an edge gadget
      belonging to $Q_0$, which does not match colors.  Therefore, substituting
      $E^*$ for $E_0$, we get case 1 and a contradiction.

      So $E^*$ belongs to $Q_1$.  This means that the end of $p_{E_0}$ is
      $v_i$.  Furthermore, since $p_{E^*}$ must have odd numbers of at least
      three colors, and since $p_{E_0}$ is the beginning of $p$ and therefore
      all of $p_{E^*}$ is in the first half, $p_{E^*}$ cannot be repeated in a
      plume of $Q_1$.  Thus $Q_1$ must end with $u_j$ of an edge gadget $E_1$
      which repeats $v_i$.  If $p_{E^*}$ ends with $v_{1-i}$, then $p_{E_1}$
      ends with $u_{1-j}$ and we are again in case 1.  On the other hand, if
      $p_{E^*}$ ends at $u_k$, then $P_{E_1}$ ends with $v_\ell$, and so either
      we are back in case 1 or the path ends after $P_{E_1}$.  But that would
      mean that the colors of $p_{E_0}$ and $p_{E_1}$ would add up in $G(Q_1)$
      to $\beta_i$ for some $i$, which contradicts the construction.

      The remaining possibility, \emph{case 4}, is that $E_0$ belongs to $Q_1$.
      Then for colors to match, the part of $p_{Q_1}$ that repeats $p_{E_0}$
      must be contained in a plume of $Q_1$ and therefore be the end of the
      path.
    \end{proof}
    \begin{claim2}
      No square path in this coloring goes through a consistency gadget.
    \end{claim2}
    \begin{proof}
      By claim 5, colors in edge gadgets must be duplicated by colors in edge
      gadgets.  Now, suppose we have a square path $p$ that goes through a
      consistency gadget $Q$.  We use the notation of claim 5 to denote
      portions of this path.  On both sides of $p_Q$, we have paths through
      edge gadgets $E_1$ and $E_2$.  We now enumerate cases once again.

      \emph{Case 1} occurs when $p_{E_1}$ is a path between $u_0$ and $u_1$.
      Then there must be a subpath repeating it, which, by claim 5, then also
      goes from $u_i$ to $u_{1-i}$ in an edge gadget $E_1^*$ of $Q$.  We can
      create another square path by replacing $p_{E_1}$ and $p_{E_1^*}$ by
      single edges within $Q$, reducing the problem to any of the other cases.

      So we can assume that both $p_{E_1}$ and $p_{E_2}$ go to a $v_i$.  Then
      if we orient the path, one subpath goes from an edge gadget to $Q$ and
      the other goes from $Q$ to an edge gadget.  Therefore, by claim 5, one
      cannot be the repetition of the other.  Thus $p$ must traverse $Q$ twice
      and associated edge gadgets four times, at least three of these times
      passing between a $u_i$ and a $v_i$.  (The fourth time it does not have
      to do this since a traversal may be split between the beginning and the
      end of $p$.)  Furthermore, the edge subpaths must be in two pairs, each
      of which has one set of colors, and at least one of the pairs must have
      both its subpaths continue in other edge gadgets.

      Suppose now that $Q$ is not associated with a universally quantified
      variable.  \emph{Case 2} occurs when the two subpaths are in the same
      $E$-gadget.  Since no edge in $G'$ is adjacent to edges of one color
      through both incident vertices, the colors of the neighboring edge
      gadgets must be different and the path cannot be square.

      In \emph{case 3}, the two subpaths are in different $E$-gadgets.  In
      order for them to be colored the same, we must assume that $Q$ controls
      four $E$-gadgets and neither of the $E$-gadgets traversed by the subpaths
      corresponds to an edge in the snout of $G'$.  But the edges not in the
      snout all border different colors, so the path once again cannot be
      square.

      \emph{Case 4}: We are left with the possibility of $Q^{x_j}$ being
      associated with a universally quantified variable $x_j$.  We colored
      $Q^{x_j}$ in such a way, however, that the sets of paths within it
      between every pair of $u$-vertices are disjoint.  Therefore, if we orient
      $p$ in a certain way, the second $p_{Q^{x_j}}$ must be at the end of the
      path.  By the same token, $p$ must begin with a section in one of the
      edge gadgets belonging to $Q^{x_j}$.

      All the $v_0$s of the edge gadgets of $Q^{x_j}$ are incident to
      $E$-gadgets of colors $Z(Q^{x_j^i})$ and all the $v_1$s are incident to
      $E$-gadgets of colors $Z(Q^{x_j^f})$.  Hence paths to $v_1$ can be
      repeated by paths to $v_1$ and paths to $v_0$ can be repeated by paths to
      $v_0$.  By construction, the possible repetitions of paths between $u_j$
      and $v_i$ are then
      \begin{enumerate}
      \item if $x_j=0$, a path between $v_0$ and $u$ in $N^{x_j}$ with a path
	between $v_0$ and $u_0$ in $P^{x_j}$
      \item if $x_j=1$:
	\begin{enumerate}
	  \item A path between $v_0$ and $u$ in $C^{x_j}$ with a path between
	    $v_0$ and $u$ in $N^{x_j}$ and a path between $v_0$ and $u_0$ in
	    $P^{x_j}$
	  \item A path between $v_1$ and $u$ in $C^{x_j}$ with a path between
	    $v_1$ and $u$ in $N^{x_j}$.
	\end{enumerate}
      \end{enumerate}
      In each case, therefore, there can be only one such repetition in an open
      path.  Since gadgets belonging to $Q^{x_j}$ occupy both the beginning and
      the end of $p$, there may be no traversal of consistency gadgets other
      than $Q^{x_j}$ within $p$.

      A path through $N^{x_j}$ cannot repeat a path through $P^{x_j}$ since
      these paths would have to go through edge gadgets corresponding to
      different direction-determining colors of $G'$.  Also, a path going
      through $v_0$ of $C^{x_j}$ is going towards the tip of the snout, and
      therefore cannot meet any more gadgets belonging to $x_j$.  This gives us
      a contradiction if $v_0$ is part of one of the square sections.  Thus we
      are left with possibility 2(b).

      This means that there must be a subpath consisting only of edge gadgets
      from $v_1$ of $C^{x_j}$ to some other $v_i$ of an edge gadget belonging
      to $Q^{x_j}$.  If the path is from $C^{x_j}$, then it must traverse an
      edge gadget belonging to $Q^{a_i^1}$, which must be repeated by an edge
      gadget at the tip of the snout, which $p$ cannot reach.  
    \end{proof}
    This restricts any potential square paths to going through a chain of edge
    gadgets.  But this means that a square path maps to a square path in $G'$,
    which does not exist if $\neg \forall x\exists y\,f(x,y)$.  Hence this is a
    reduction.

    It is easy to see that the reduction runs in polynomial time, specifically
    $O(n^8)$.
  \end{proof}

  \section{Acknowledgements}
  We would like to thank Chris Umans for his mentoring and helpful suggestions
  at every stage of the research; and Grigori Mints and Yuri Manin for
  reviewing the proofs.

\end{document}